\newcommand{\compactcaption}[1]{\vspace{-0.90em}\caption{#1}\vspace{-1.05em}}
\newtheorem{theorem}{Theorem}
\begin{document}

\date{}

\title{\Large \bf Expander Datacenters: From Theory to Practice}

\author{
 {Vipul Harsh, Sangeetha Abdu Jyothi, Inderdeep Singh, P. Brighten Godfrey} \\
   UIUC \\
}


\maketitle

\thispagestyle{empty}





\subsection*{Abstract}
Recent work has shown that expander-based data center topologies are robust and can yield superior performance over Clos topologies. However, to achieve these benefits, previous proposals use routing and transport schemes that impede quick industry adoption. 
In this paper, we examine if expanders can be effective for the technology and environments practical in today's data centers, including use of traditional protocols, at both small and large scale, while complying with common practices such as over-subscription.
We study bandwidth, latency and burst tolerance of topologies, highlighting pitfalls of previous topology comparisons. We consider several other metrics of interest: packet loss during failures, queue occupancy and topology degradation. 
Our experiments show that expanders can realize $3\times$ more throughput 
than an equivalent fat tree, and $1.5\times$ more throughput than an equivalent leaf spine topology, for a wide range of scenarios, with only traditional protocols. 
 We observe that expanders achieve lower flow completion times, are more resilient to bursty load conditions like incast and outcast and degrade more gracefully with increasing load. 
Our results are based on extensive simulations and experiments on a hardware testbed with realistic topologies and real traffic patterns.


\section{Introduction}

Modern data centers (DCs) need to be designed for high performance and for high burst tolerance. To achieve the same, hyperscale DC architectures are based on Clos networks or Fat-trees~\cite{AlFares08} that provide full bandwidth (modulo oversubscription) e.g.  Google's Jupiter ~\cite{JupiterRising15}, Facebook's DC fabric~\cite{fbDcFabric} and VL2~\cite{vl209}. Small and medium DCs, where the 3-tier Clos topology is overkill, are commonly based on two-tiered leaf-spine designs. 
Recently, alternative topologies based on expander graphs\footnote{Expanders are an extensively studied class of graphs that are in a sense maximally or near-maximally well-connected.}, such as Jellyfish~\cite{singla12}, Xpander~\cite{valadarsky16}, Slimfly~\cite{besta14} and LongHop~\cite{tomic13}, have been proposed for high performance. 

To extract maximum benefits from expander networks, previous proposals have relied on non-traditional protocols. Multipath TCP (MPTCP)~\cite{MPTCP} and k-shortest path routing~\cite{kShortestPaths} was used in Jellyfish~\cite{singla12} and Xpander~\cite{valadarsky16}. Kassing et. al.~\cite{kassing17} demonstrated benefits of expanders using a hybrid of ECMP and Valiant load balancing when combined with flowlet switching~\cite{flowletSwitching1,flowletSwitching2}. While all the above schemes are effective, they are not very widely used in today's data centers and require additional mechanisms. Specifically, K-shortest path routing is feasible with today's hardware but requires both control plane and data plane modifications; MPTCP requires operating system support and configuration which is often out of the control of the data center operator; and flowlet switching depends on flow size detection and dynamic switching of paths. We argue that dependence on these protocols impedes quick industry adoption.




In this paper, we examine if expander networks retain their performance advantages in environments practical for today's data centers. Towards this end, we consider several factors. Traffic engineering should ideally not require host transport modifications. Routing should ideally be based on simple oblivious\footnote{Oblivious routing refers to path selection independent of the current traffic pattern.} schemes using commonly available mechanisms like shortest path routing, ECMP, segment routing, and MPLS. 
And finally, whereas (to the best of our knowledge) all past work on expander DCs has focused on comparing with larger three-tiered Clos networks, it is important to understand whether these topology designs can be applied to the much more numerous leaf-spine networks of small- and medium-scale DCs, with realistic oversubscription ratios.

To analyze performance for a wide range of cases, we introduce the $C-S$ model as a synthetic workload (Section~\ref{cs model}). The $C-S$ model captures several frequently occurring application patterns in distributed systems besides a range of skewed and uniform patterns. We consider throughput and flow completion times as the main performance metrics.  We test burst tolerance for different types of incast and outcast patterns (modeled as specific points in the $C-S$ model) and present experiments with a publicly available Facebook DC traffic trace~\cite{fbTrafficPattern}. We also present experiments on a physical testbed by emulating modestly sized topologies (up to 48 servers). We study several other metrics of interest- queue occupancy, traffic loss due to failures, topology degradation with increasing load and fraction of long distance links. We base our experiments on shortest path routing and TCP in the transport layer which constitute the most common setting in DCs. Additionally, we compare k-disjoint path routing, which can be implemented using segment routing~\cite{segment_routing} (Section~\ref{kshortimpl}).  We use realistic baseline topologies, that are inspired by industry recommended sizing configurations~\cite{AristaBestPractices}.

Our results show that expanders can offer a very large performance improvement over Clos topologies, using only currently-supported protocols, for a wide range of traffic patterns in the $C-S$ model, particularly with realistic over-subscription ratios (e.g., $2\times$-$4\times$). Moreover, this performance advantage is present even for relatively small leaf-spine networks, such as an 8-leaf, 2-spine hardware testbed.

We found these results surprising, given that past work has required more advanced protocols, and has found generally larger benefits with larger scale~\cite{jyothi2016}. There are several factors at play influencing these performance differences. Past work has discussed the expander's advantage in lower mean path length (hence, delivering each packet requires less total bandwidth), and the disadvantage that its more complex structure means it will not attain its full potential without special routing protocols. However, we found two other effects to be critical. First, by moving from a two-tiered design to a flat design (while using the same physical equipment), expanders increase the bandwidth exiting each rack. In fact, we show analytically that the ToR over-subscription is $2\times$ less in expanders than leaf-spines, irrespective of the number of leaf and spine switches, and is $4\times$ less in a 3-tier fat tree (Section~\ref{leaf spine comparisons}). This does not mean the expander always gets $2\times$ higher throughput. But when two factors are combined -- (1) over-subscription causing a bottleneck exiting the leaf-spine's racks, and (2) a skewed traffic pattern causing this bottleneck at a minority of racks -- the expander is effectively able to mask the over-subscription and behave closer to a non-blocking network. The oversubscribed setting also happens to be the more realistic scenario and has been overlooked in past work. 

Second, use of the expander's capacity is more efficient in oversubscribed networks. If the goal is to obtain a non-blocking (or \emph{full-bandwidth}) network as in many recent studies~\cite{singla12,kassing17}, expanders reach that level of performance only after hitting a regime of diminishing returns as more switches are added (Figure~\ref{fig:oversubscriptionCrucial}) since the DC is mostly host-bottlenecked by that point. Even with that inefficiency, expanders may edge out Clos networks in cost, but we find the advantage is larger in the oversubscribed regime.

A summary of our results are as follows:
\begin{itemize}[leftmargin=*]
\itemsep0em
\item We show that expanders can yield as much as $3$-$4\times$ higher throughput than fat trees and up to $2\times$ that of leaf-spine networks for a wide range of traffic patterns, under realistic over-subscription ratios and with practical protocols.
\item We demonstrate that the improvement in performance does not come at the cost of fairness; fairness in the expander was at least as good as in fat trees (Figure~\ref{fig:fairness}).
\item We show that for a publicly available DC traffic trace~\cite{fbTrafficPattern}, expanders result in significantly smaller flow completion times at the tail and degrade more gracefully than fat trees as the network is increasingly stressed. 
\item We show that expanders result in comparable amounts of transient packet loss to fat trees during link and switch failures. 
\item We demonstrate that expanders are more resilient to bursty load situations like incasts and outcasts resulting in smaller flow completion times and smaller queue sizes. 
\item We present modestly-sized experiments for topologies up to 48 servers emulated on a hardware testbed. Our hardware experiments compare throughput, flow completion times and the job completion time of a heavy shuffle operation and confirm our simulation results.
\end{itemize}

Our simulations are based on the htsim~\cite{htsim} and ns3~\cite{ns3} simulators. We have made our code publicly available and provided scripts to reproduce experimental results for any topology (obfuscated for review). Overall, we believe our results show that expanders are highly promising for deployment using today's hardware, even at moderate scale. Indeed, this scale may be especially practical in the near term, since concerns over wiring complexity are less relevant.

\begin{figure}
\centering
\includegraphics[height=1.5in, width=1.8in]{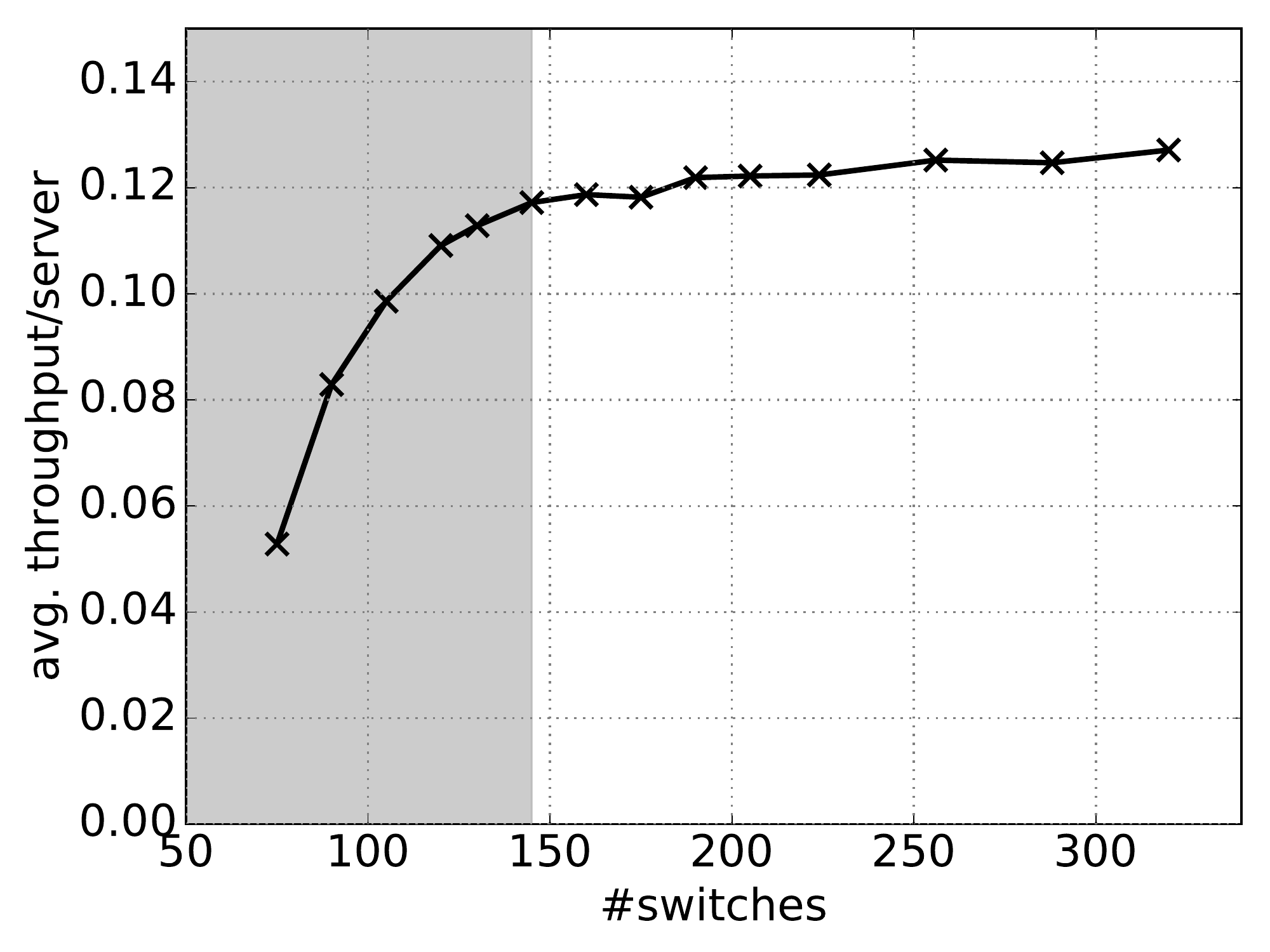}
\compactcaption{Throughput for skewed traffic for various sizes of random graphs supporting an equal number of servers. The limiting bottleneck is the network in the shaded region, and servers elsewhere, hence the flattening of the curve.}
\label{fig:oversubscriptionCrucial}
\end{figure}

\section{Background}
Expanders are a class of graphs that have high expansion ratios ~\cite{kowalski2013expander} (defined as the ratio of the outgoing edges from a set of nodes in a graph to the size of the set). Expander graphs provably~\cite{kowalski2013expander} have good connectivity and high bisection bandwidth, hence are a good option for datacenter networks. Singla et. al~\cite{singla12} first showed that random graphs (Jellyfish) are competitive to a Clos-based topology for datacenter networks. They demonstrated in ~\cite{singla14} that random graphs are flexible to accommodate for heterogeneous switch hardware and come close to optimal performance for uniform patterns under a fluid flow routing model.

Asaf et. al~\cite{valadarsky16} proposed the Xpander topology as a cabling-friendly alternative to Jellyfish, while matching its performance. Both Jellyfish and Xpander used K-shortest path routing and MPTCP~\cite{MPTCP} in the transport layer to demonstrate performance advantages. The Xpander topology has larger number of long distance links than the random graph, albeit with more structure allowing cables to be bundled together neatly. Nevertheless, the large number of long distance links hinders manageability. 
In Section~\ref{cross links}, we discuss a cabling solution to decrease the number of long distance links without compromising on expander quality. 

Kassing et. al~\cite{kassing17} demonstrated that expander networks can outperform fat trees for skewed traffic patterns using a combination of VLB, shortest path routing and flowlet switching~\cite{flowletSwitching1,flowletSwitching2}. More importantly, comparisons in Jellyfish~\cite{singla12,singla14}, Xpander~\cite{valadarsky16} and a recent evaluation~\cite{kassing17} suffer from bottleneck at the end hosts. Furthermore, reliance on non-traditional protocols restricts the practical realization of these proposals.  They also showed that under a fluid flow model where the host capacity is not modeled (and hence problems related to comparisons with full bandwidth do not arise), expanders can achieve around $3\times$ more throughput than fat trees. Their work left an open question of designing better routing algorithms to realize this with oblivious routing schemes.  In our work we show that in fact traditional routing algorithms suffice. However, the benefits are apparent only when the network is oversubscribed, rather than when the bottleneck lies primarily with hosts.


Jyothi et. al~\cite{jyothi2016} showed that under the fluid flow model with ideal routing, the random graph outperforms the fat tree for near-worst case traffic matrices. The worst case traffic matrix is a permutation matrix since any traffic matrix in the hose model~\cite{hoseModel} can be written as a convex combination of all permutation matrices (using Birkhoff-von Neumann's Theorem~\cite{birkhoffVonNeumann}). Although computing the exact worst case permutation matrix is computationally hard, longest matching traffic patterns are a good candidate. The comparisons in ~\cite{jyothi2016} are insightful since they don't model the server-to-switch bottleneck and hence avoid the diminishing returns of the full-bandwidth scenario. However, the use of optimal routing and an idealized fluid-flow model does not give insights about the feasibility of practical routing schemes.

These works leave substantial room for improving the practicality of using expanders for data centers. It's established that expanders can achieve superior performance; our goal is to examine if the gains still hold when used with traditional protocols. Answering that question is primarily a measurement study, guided towards practical settings. We next describe our methods for this study.





\section{Experimental Setup}

\subsection{Topologies}
\vskip -0.15 cm
\textbf{Fat trees (k)}:
We used standard fat trees~\cite{AlFares08} with $k$ ports, oversubscribed at ToR by 1,2 and 4$\times$. For oversubscribing, we assume that each ToR supports $4 \times$ more servers than a full bandwidth fat tree, if the oversubscription ratio is $4$. We used $k=16$ (320 switches, 4096 servers, $4\times$ oversubscribed) for simulations in htsim~\cite{htsim} and $k=10$ (125 switches, 1000 servers, $4\times$ oversubscribed) for simulations in ns3.

\textbf{Leaf-spine (x, y)}:
We consider the following leaf spine topology with switch degree $(x+y)$ for arbitrary parameters $x$ and $y$, 
\begin{itemize}
\itemsep0em
\item There are $y$ spine switches, each of them connect to all leaf switches.
\item There are $(x+y)$ leaf switches, each connect to all $y$ spine switches.
\item Each leaf switch is connected to $x$ servers.
\end{itemize}

As per recommended industry practices~\cite{AristaBestPractices}, we choose an oversubscription ratio $= x/y = 3$ with $x=24, y=8$ for experiments in the $C-S$ model. The leaf spine topology built this way with parameters $x=48$ and $y=16$ is exactly same as an actual industry-recommended configuration~\cite{AristaBestPractices}. We chose this recommended configuration in part because it uses leaf and spine switches with the same line speed, making comparisons more straightforward; we leave heterogeneous configurations to future work, but expect similar results.

\textbf{Random Graph}:
To compare a random graph topology with a leaf-spine, we build a random graph with the exact same equipment by rewiring the baseline leaf spine topology, redistributing servers equally across all switches (including switches that previously served as spines) and applying a random graph to the remaining ports.

To compare a random graph with a full bandwidth fat tree, we build a random graph with the exact same equipment the fat tree, redistributing servers across all switches (including switches that constitute the upper layer core and aggregation layer) and applying a random graph to the remaining ports.  To convert this to an oversubscribed topology, we add more servers to every rack of the random graph, distributing servers evenly across all racks. (Note that this follows the same over-subscription procedure as the fat tree except that it distributes the additional server ports across all switches rather than only the fat tree's ToRs.)

Note that other expanders have very similar performance characteristics to the random graph~\cite{valadarsky16} and hence our result applies to all high-end expanders. In our text, we use the shorthand RRG for Random Regular Graph. 


\begin{figure*}
  \hskip -0.1 cm
  \begin{minipage}[b]{0.32\textwidth}
    \centering
    \begin{subfigure}[b]{\linewidth}
    \includegraphics[width=0.95\textwidth]{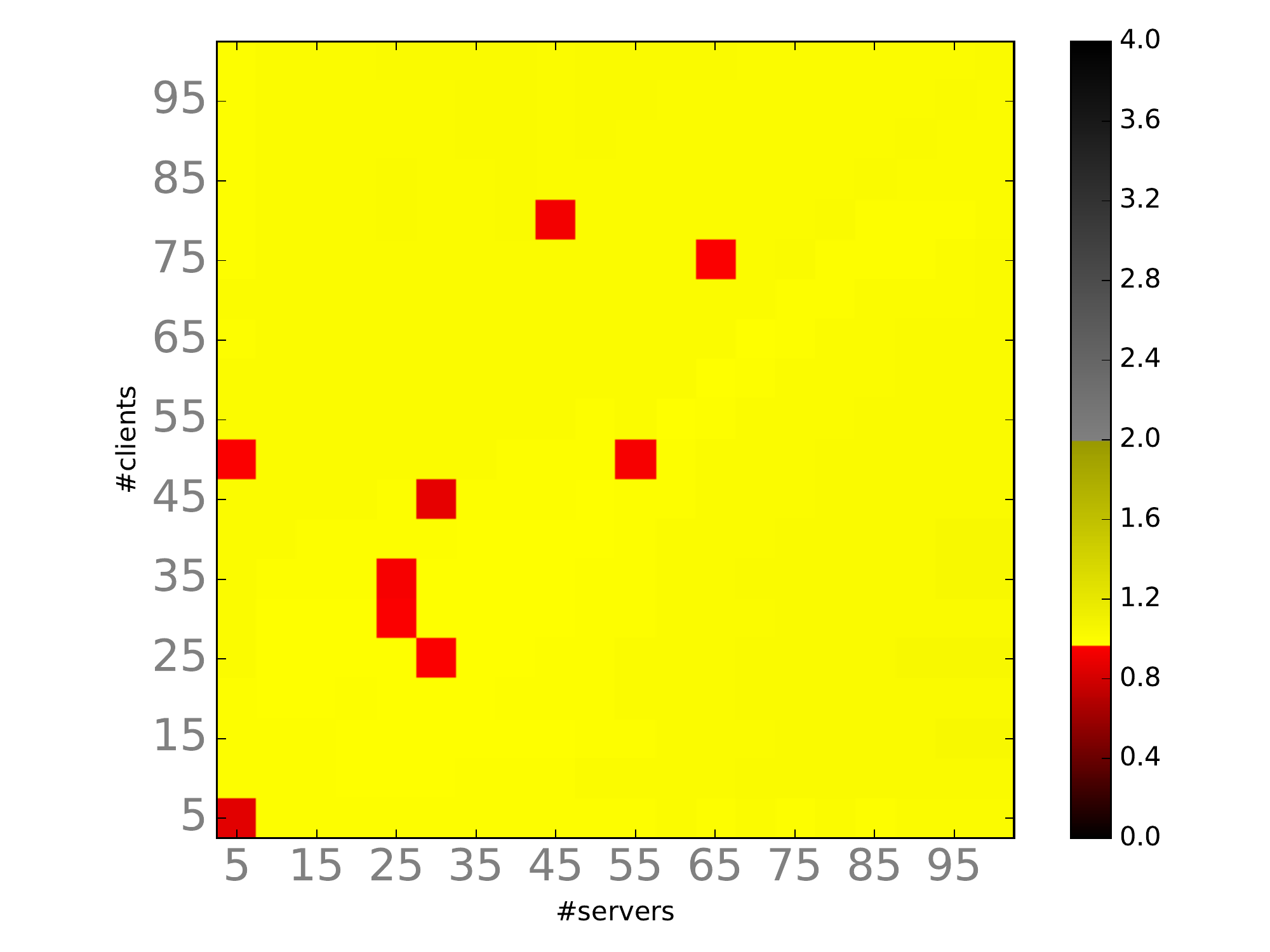}
    \caption{Small runs ECMP, OS=1}
    \label{fig:ecmp cs small os1}    
    \end{subfigure}
    \begin{subfigure}[b]{\linewidth}
    \includegraphics[width=0.95\textwidth]{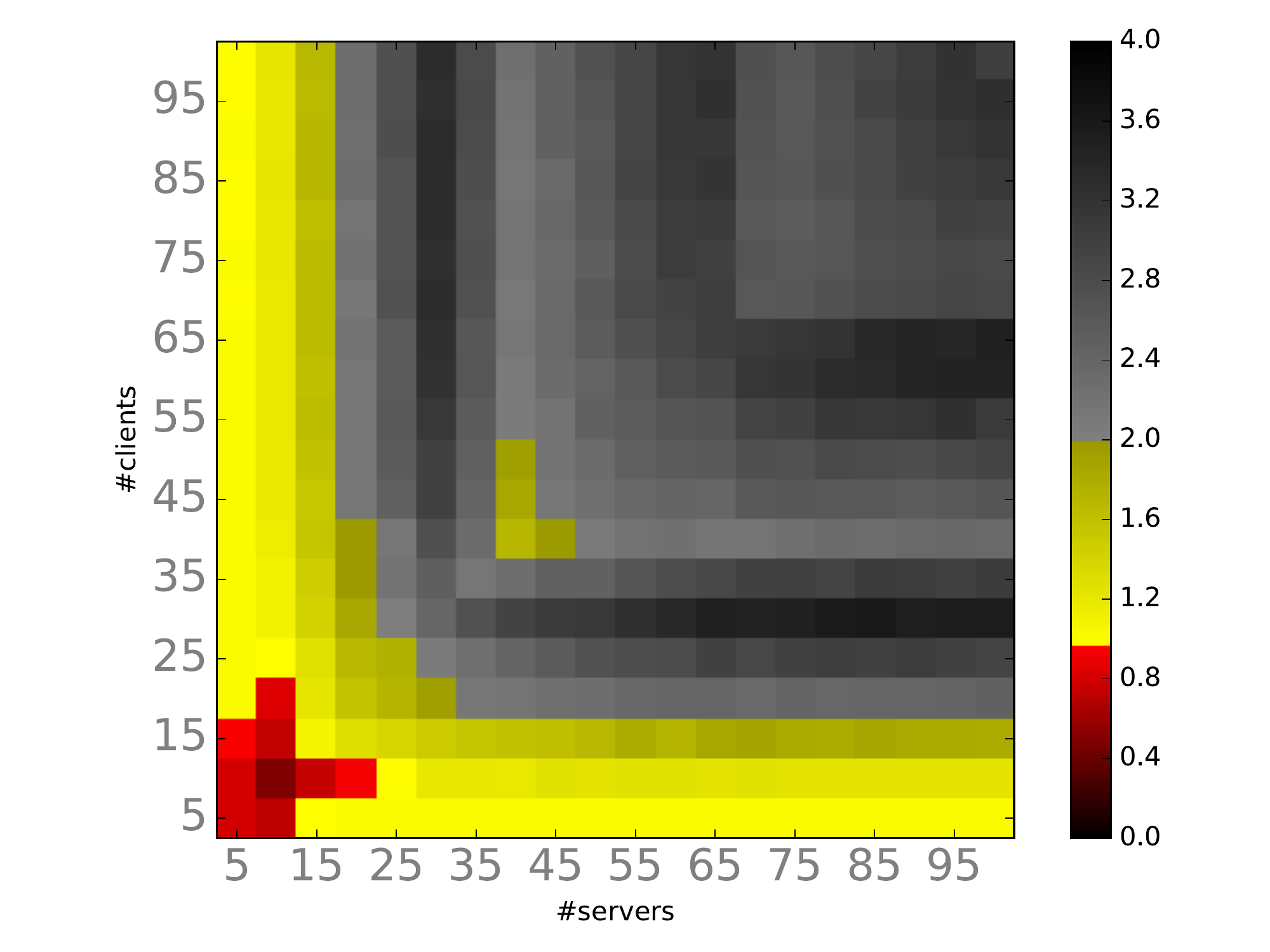}
    \caption{Small runs K Shortest paths, OS=4}
    \label{fig:kshort cs small}
    \end{subfigure}
  \end{minipage}
  \begin{minipage}[b]{0.32\textwidth}
    \centering
    \hskip -0.1 cm
    \begin{subfigure}[b]{\linewidth}
      \includegraphics[width=0.95\textwidth]{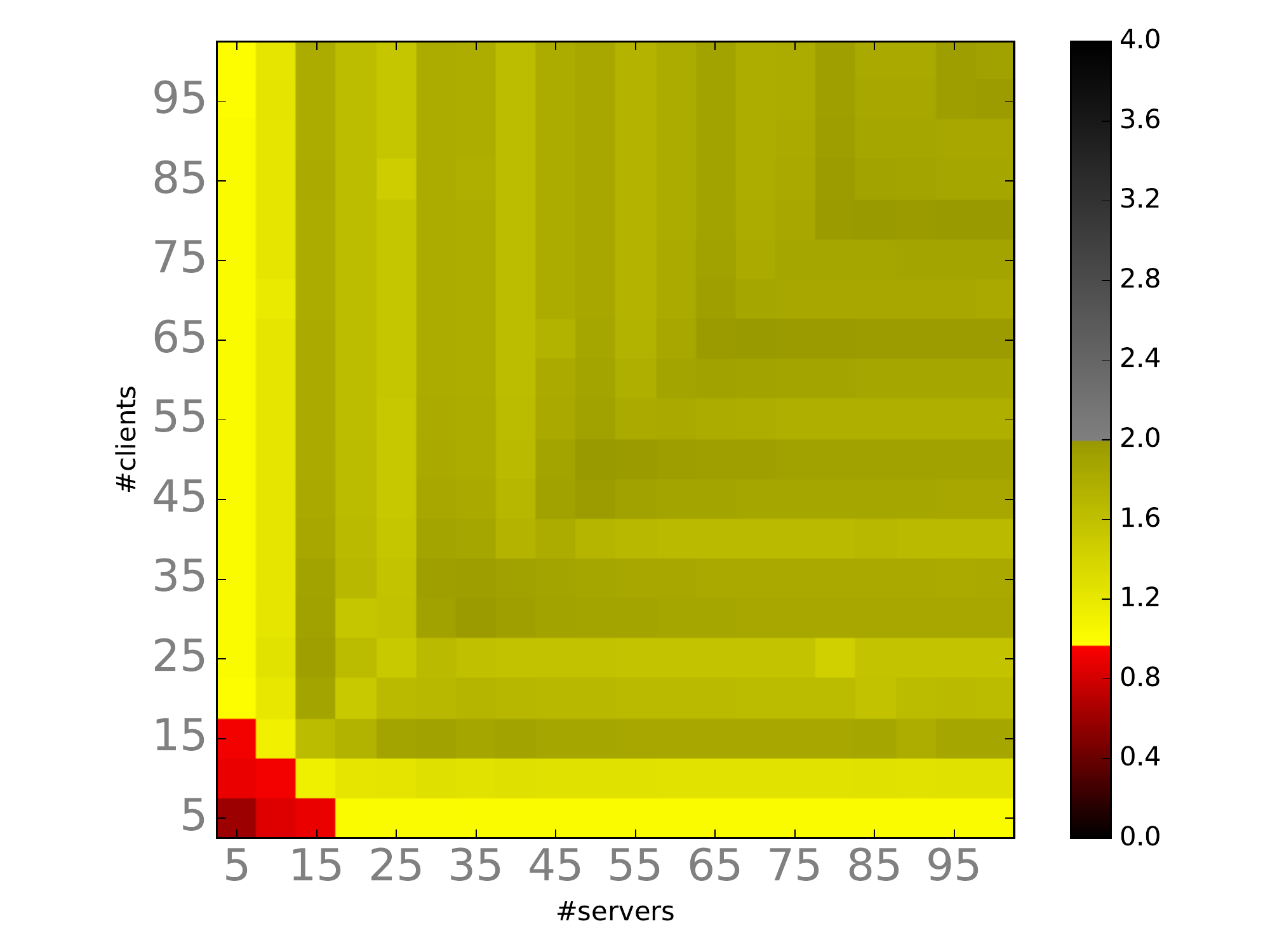}
      \caption{Small runs ECMP, OS=2}
      \label{fig:ecmp cs small os2}     
    \end{subfigure}
    \begin{subfigure}[b]{\linewidth}
      \includegraphics[width=0.95\textwidth]{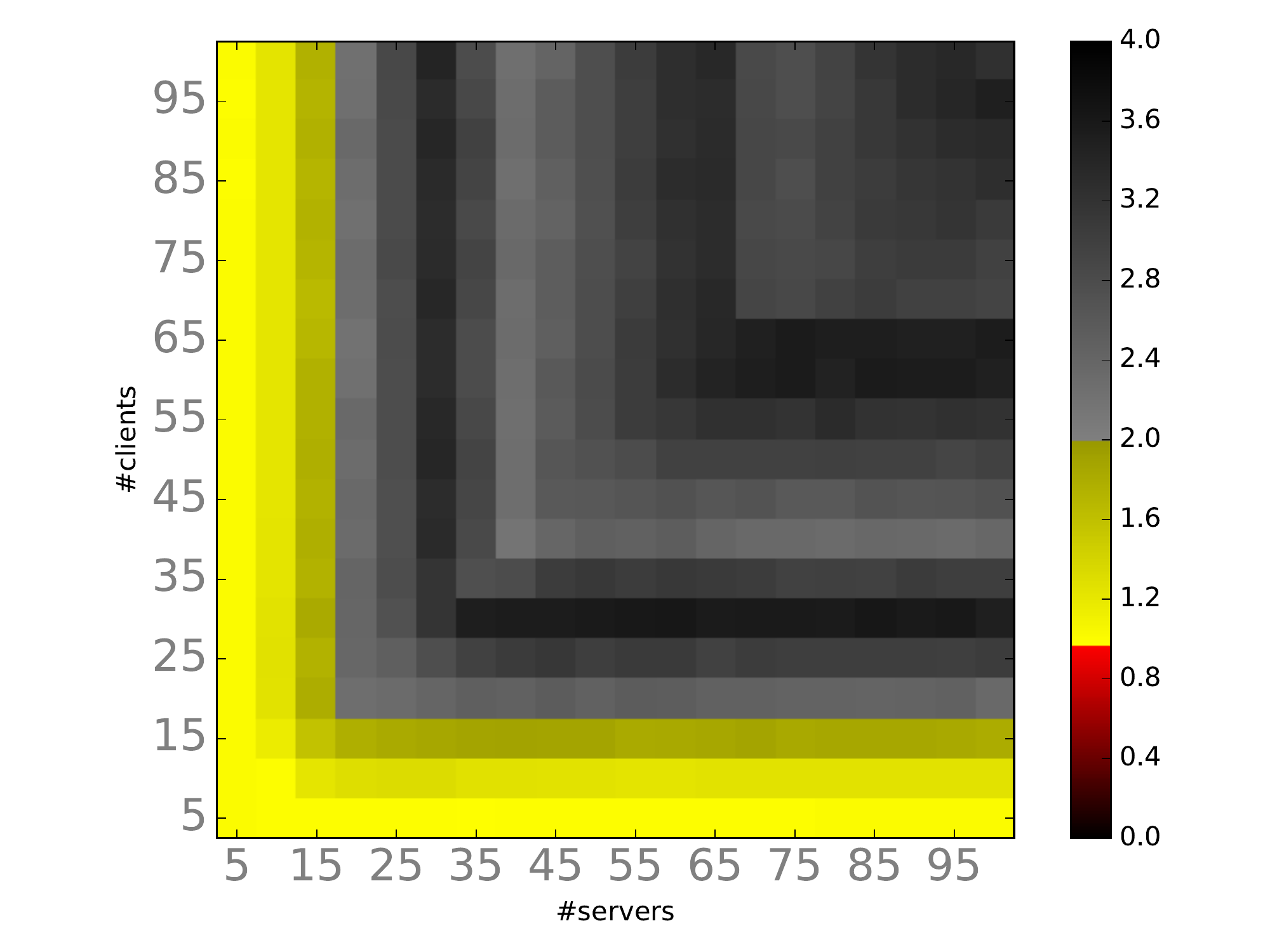}
      \caption{Small runs K Disjoint paths, OS=4}
     \label{fig:kdisjoint cs small}   
    \end{subfigure}
  \end{minipage}
    \begin{minipage}[b]{0.32\textwidth}
    \centering
    \hskip -0.1 cm
    \begin{subfigure}[b]{\linewidth}
        \includegraphics[width=0.95\textwidth]{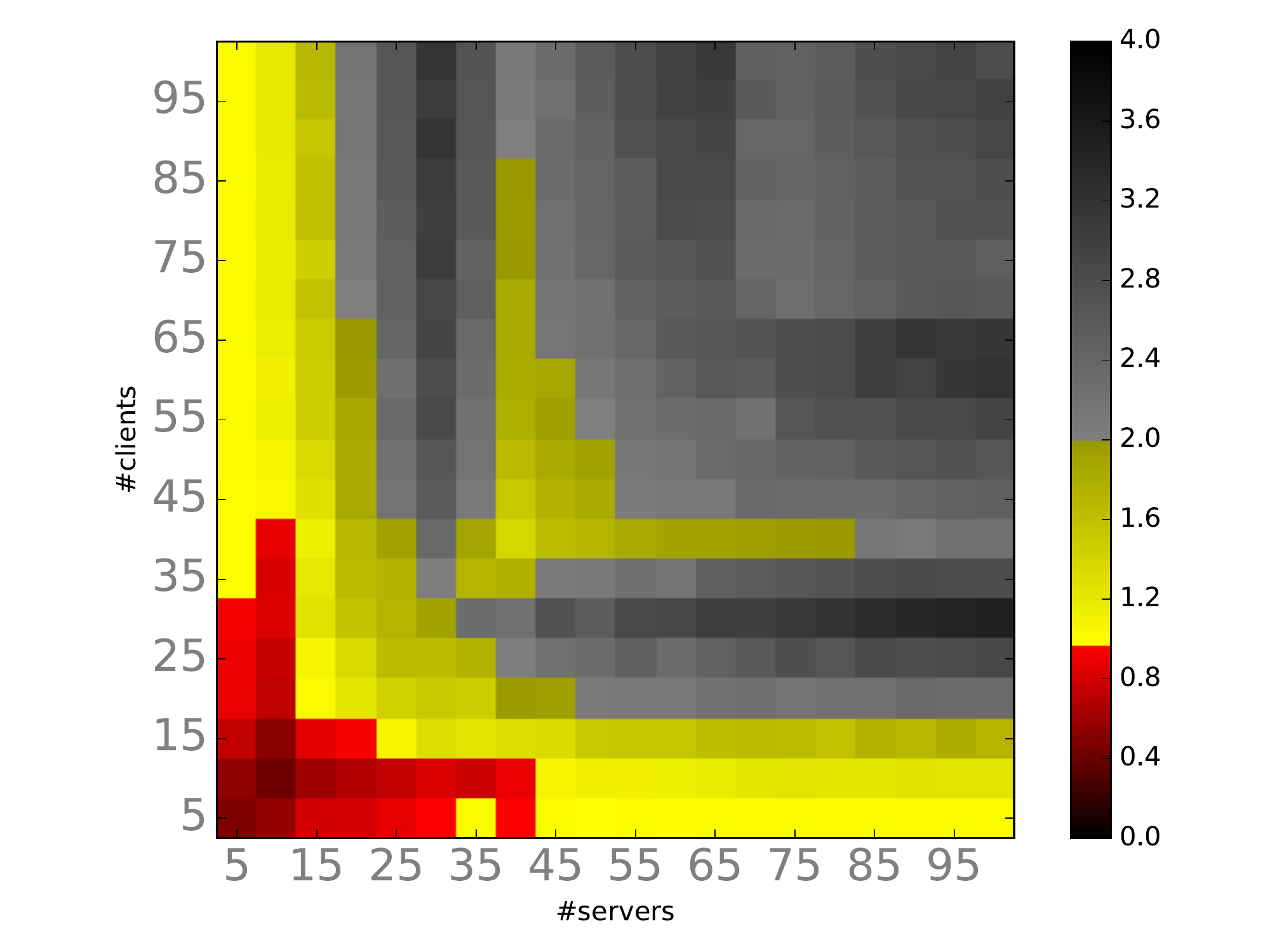}
        \caption{Small runs ECMP, OS=4}
        \label{fig:ecmp cs small}
    \end{subfigure}
    \begin{subfigure}[b]{\linewidth}
        \includegraphics[width=0.95\textwidth]{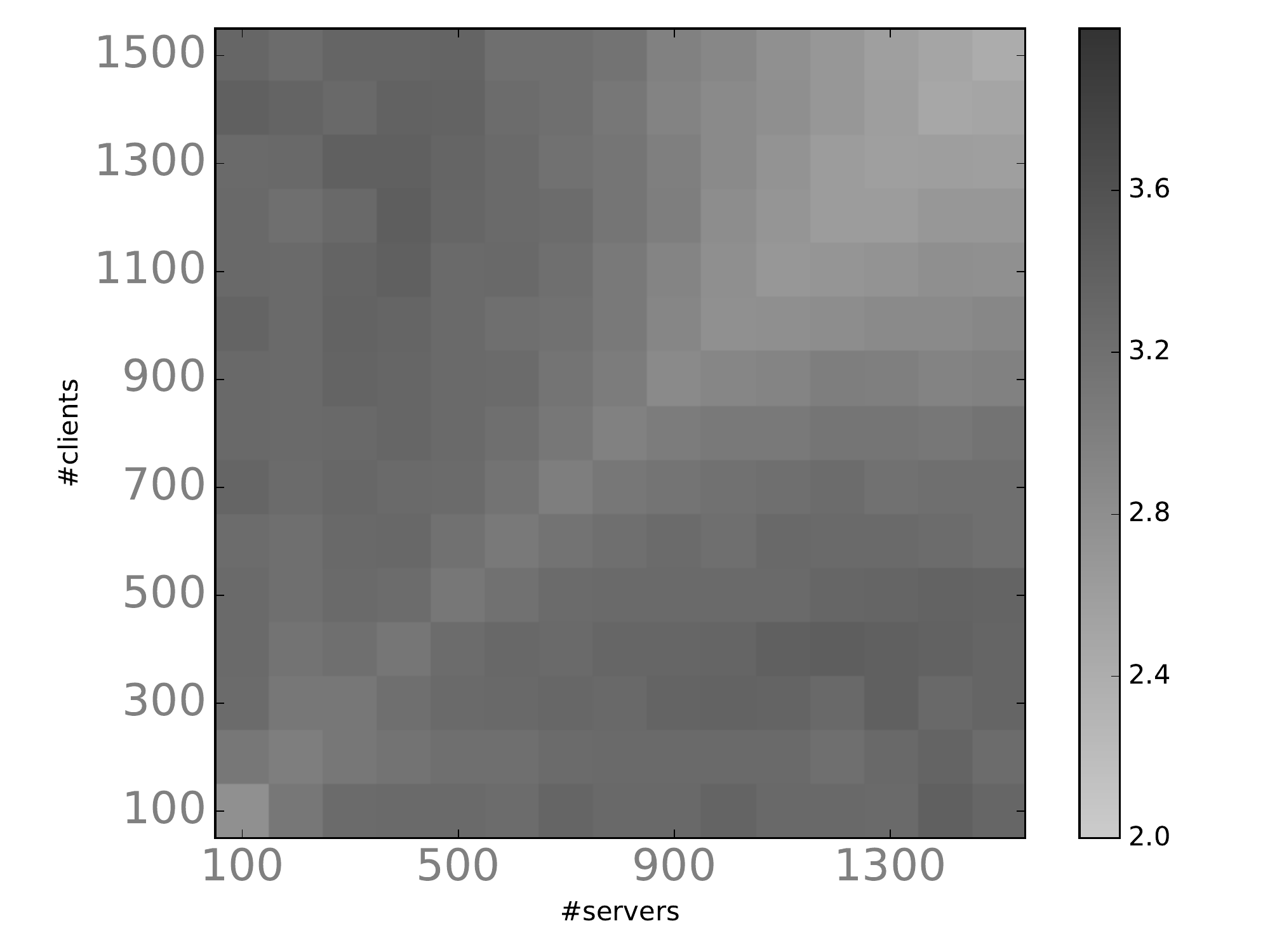}
        \caption{Large runs ECMP, OS=4}
        \label{fig:ecmp cs large}
    \end{subfigure}
  \end{minipage}
  \label{fig:csJfFat}
  \compactcaption{Random Graph vs Fat tree $C-S$ model: Different routing schemes on small and large runs for the C-S model with different oversubscription ratios. Each tile in the heatmap is the average throughput per flow in the random graph divided by that in the fat tree for the given value of $C$ and $S$}
\end{figure*}

\subsection{Routing and Transport}

We used shortest path routing with ECMP and TCP in the transport layer for all sets of experiments. Additionally, for bandwidth experiments, we also evaluate k-shortest path routing~\cite{kShortestPaths} and k edge-disjoint path routing (henceforth, referred to as k-disjoint path routing) to contrast with shortest path routing. In Section~\ref{kshortimpl} we discuss a simple practical variant of k-disjoint path routing that can be easily implemented with segment routing.

\subsection{C-S Model} \label{cs model}
Traffic patterns in a datacenter have high variance and change rapidly with time~\cite{ghobadi16}. Morever, there is no underlying theme across datacenters~\cite{roy2015,benson2010,delimitrou2012}. Majority of the traffic was not found to be rack-local in Facebook's datacenter~\cite{roy2015} , in contrast to findings in ~\cite{benson2010,delimitrou2012}. \cite{ghobadi16} showed that traffic pattern is highly skewed in Microsoft's datacenters with only a very little fraction of racks carrying most of the traffic, while in Facebook's datacenter~\cite{roy2015}, demand was wide-spread, uniform and stable across a longer time period.

Motivated by these arguments, we believe that there is a need to succinctly represent a wide range of workloads for topology evaluations. Previous works have used permutation matrices~\cite{singla12}, all to all patterns~\cite{singla12,guo2008}, near worst case longest matchings~\cite{jyothi2016}, skewed traffic patterns~\cite{ghobadi16,kassing17} and snapshots from real  datacenters. While all the above traffic matrices have different characteristics (worst case, skewed or  uniform) and provide useful insights, we argue that these are very specific experimental points in a wide range of possibilities and hence do not portray the entire picture.

We propose a simple model to capture a wide range of scenarios. We pick a subset of client hosts, say set $C$ and pack these clients into the fewest number of racks while randomly choosing the racks in the datacenter. Similarly, we pick a subset of server hosts, say set $S$ and pack all servers into the fewest number of racks possible (avoiding racks that have been used for $C$). We wish to measure the network capacity between the sets $C$ and $S$ for all possible sizes of $C$ and $S$. 
By varying the sizes of $C$ and $S$, we claim that this model, which we call the $C-S$ model, captures a wide range of patterns that commonly occur in applications.

\begin{itemize}[leftmargin=*]
\itemsep0em
\item Setting $|S|$ = 1 results in incast pattern. Similarly, setting $|C|$ = 1 can mimic outcasts. In Section~\ref{extreme case analysis}, we simulate different types of incast and outcast scenarios with specific parameter settings of $C$ and $S$.
\item If $C$ represented the map tasks and $S$ all reduce tasks of a map-reduce job, then the $C-S$ model can capture the shuffle operation between the map and reduce tasks.
\item If $C$ represented the parameter servers and $S$ the workers for ML training, the  $C-S$ model can capture model updates from workers to the parameter servers and parameter reads from parameter servers back to the workers.
\item Setting $|S| = |C| = r$, where $r$ is the number of servers in a rack, captures rack-to-rack traffic pattern.  
\item With $|C| << |S|$, the $C-S$ model captures a wide range of skewed traffic patterns. 
\item Uniform traffic patterns across the entire datacenter by setting $|C| = |S| = n/2$, where $n$ is the total number of hosts. 
\end{itemize}



\section{Bandwidth capacity} \label{bandwidth capacity}
We study bandwidth capacity in the $C-S$ model. By covering a wide range of scenarios, the $C-S$ model allows us to identify scenarios where expanders have an advantage over fat-trees and leaf spines, and where it doesn't.
We graph the $C-S$ model using a heatmap, varying the size of sets $C$ and $S$ on the x and y-axis. Between every client in $C$ to every server in $S$, we run a long running TCP connection with infinite data to send. Each cell in the heat map represents the average throughput per flow in the random graph normalized by the average throughput in the baseline topology. We plot two heatmaps in each case for (a) small runs comprising of smaller values of $C$ and $S$, upto a few handful racks and (b) larger runs comprising of larger values of $C$ and $S$, almost all the way till the entire datacenter. We used the simulator based on htsim from~\cite{singla12} for these experiments.

\subsection{Comparisons with Fat tree} \label{bandwidth capacity fat tree}
As noted in ~\cite{kassing17}, a lot of bandwidth capacity remains unused in a fat tree when only a few racks are sending traffic. This is also evident from the heatmaps (Figures~\ref{fig:ecmp cs small os1}-~\ref{fig:ecmp cs large}). For a large range of $C$ and $S$, the expander gets drastically more throughput than the fat tree (3-4x in many cases, see Figure~\ref{fig:ecmp cs large}).


\subsubsection{Effect of oversubscription}
Figures~\ref{fig:ecmp cs small os1}, \ref{fig:ecmp cs small os2} and \ref{fig:ecmp cs small} illustrate the effect of oversubscription for small sizes of $C$ and $S$ in the $C-S$ model. With an oversubscription ratio of 4, for smaller runs, the expander drastically outperforms the fat tree for all sizes of $C$ and $S$ (except for the red patch towards the bottom left). For runs (Figure~\ref{fig:ecmp cs large}) with larger values of $C$ and $S$, the expander network outperforms fat trees drastically in all cases. 
Also observe that when run with oversubscription 1 (Figure~\ref{fig:ecmp cs small os1}), the difference between expander and fat trees are unapparent because the network is not the bottleneck, highlighting problems of comparisons with full bandwidth. The differences get increasingly evident as the oversubscription ratio is increased as effects of rack oversubscription come into the picture. Expanders significantly helps out the hot racks by alleviating rack oversubscription.

An oversubscribed topology also brings out cases where expander graphs do not perform well (the red patches on bottom left of Figure~\ref{fig:ecmp cs small}). As noted in ~\cite{kassing17}, expander being a flat topology (all switches connect to servers) offers fewer shortest paths between a pair of nodes and hence when two racks send and receive data at full link speed, the expander network does not match fat tree's performance with shortest path routing. This is the only region in the $C-S$ model where expander with shortest path routing does not match the performance of a fat tree. 

\subsubsection{Effect of routing scheme}
Figures ~\ref{fig:ecmp cs small}, \ref{fig:kshort cs small} and \ref{fig:kdisjoint cs small} illustrate the performance of random graphs with different routing strategies namely shortest paths, k-shortest paths and k-disjoint paths. As shown in Figure~\ref{fig:kdisjoint cs small}, k-disjoint path routing eliminates the few cases where the random network is not able to match fat tree's performance with shortest paths. One of the reasons is that k-disjoint routing distributes load evenly just at the edge of the network (network links at the ToR).  Although naively k-disjoint paths require perfect source routing, we show that they can be approximated very accurately by combinations of two shortest paths, which can be easily implemented with MPLS or segment routing (Section~\ref{kshortimpl}).
We emphasize that expander networks are already advantageous with ECMP and shortest paths in most cases. K-disjoint path routing might not be necessary for practical purposes.

\begin{figure}
\centering
\includegraphics[width=0.4\textwidth]{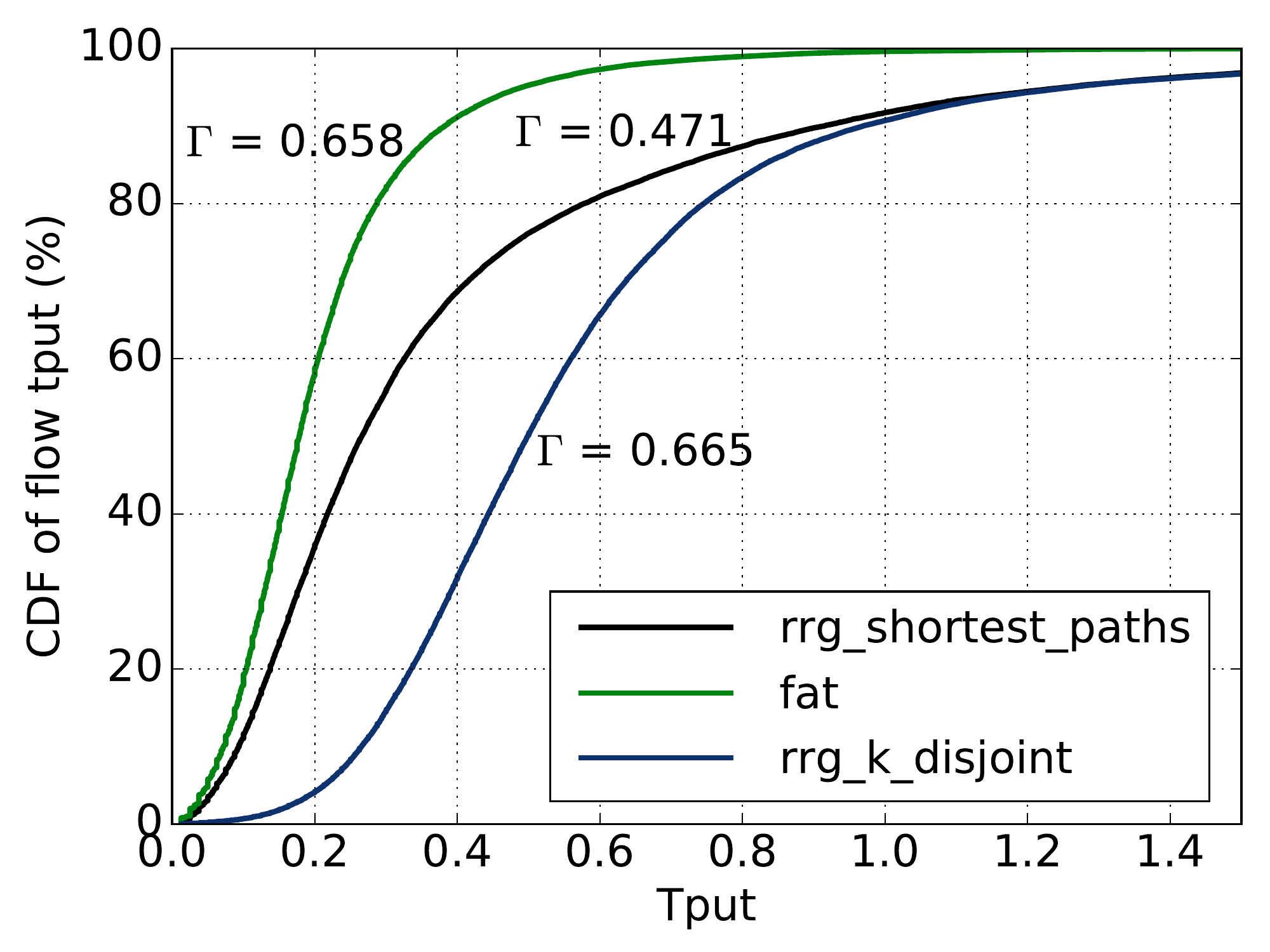}
\compactcaption{CDF of flow throughput for C=200, S=200, annotated with their corresponding Jain's fairness index ($\Gamma$).}
\label{fig:fairness}
\end{figure}

\begin{figure*}
  \vskip -0.8 cm
  \begin{minipage}[b]{0.24\textwidth}
    \centering
    \begin{subfigure}[b]{\linewidth}
        \includegraphics[width=1.05\textwidth]{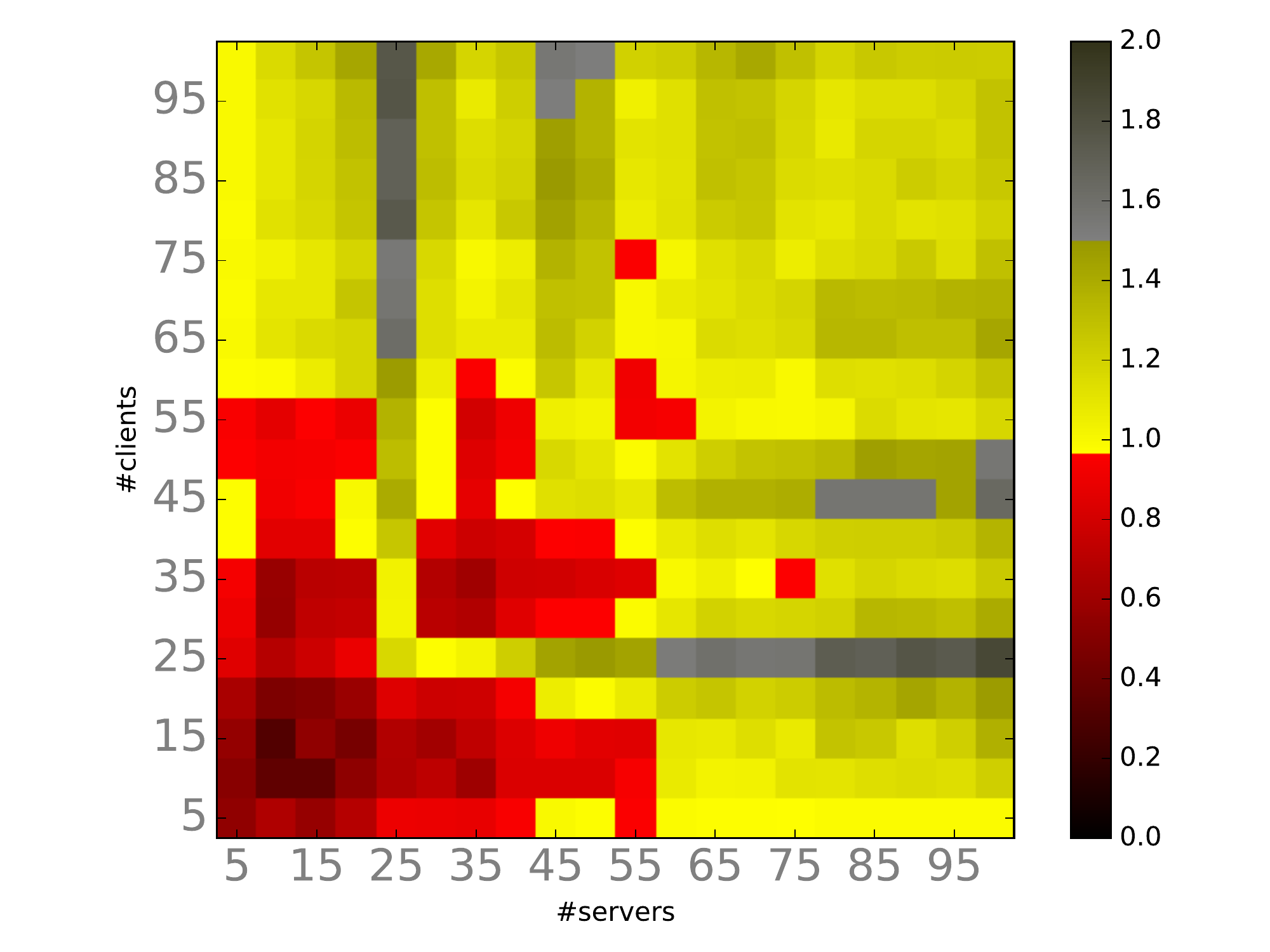}
        \caption{Small runs ECMP}
        \label{fig:ecmp ls small} 
    \end{subfigure}
  \end{minipage}
  \begin{minipage}[b]{0.24\textwidth}
    \centering
    \hskip -0.1 cm
  \begin{subfigure}[b]{\linewidth}
    \includegraphics[width=1.05\textwidth]{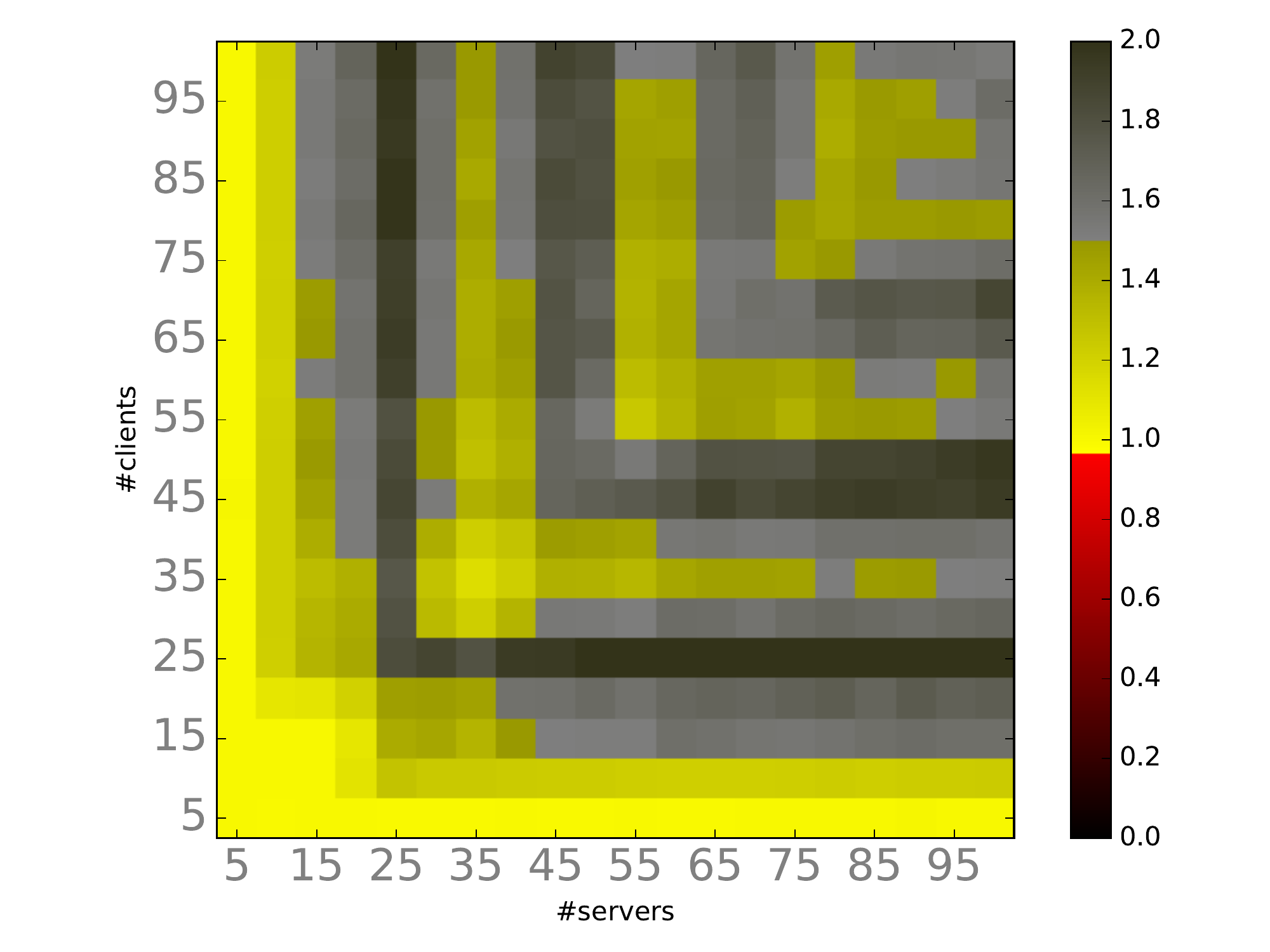}
    \caption{Small runs K Disjoint}
    \label{fig:kdisjoint ls small}
  \end{subfigure}
  \end{minipage}
  \begin{minipage}[b]{0.24\textwidth}
    \centering
    \hskip -0.1 cm
    \begin{subfigure}[b]{\linewidth}
        \includegraphics[width=1.05\textwidth]{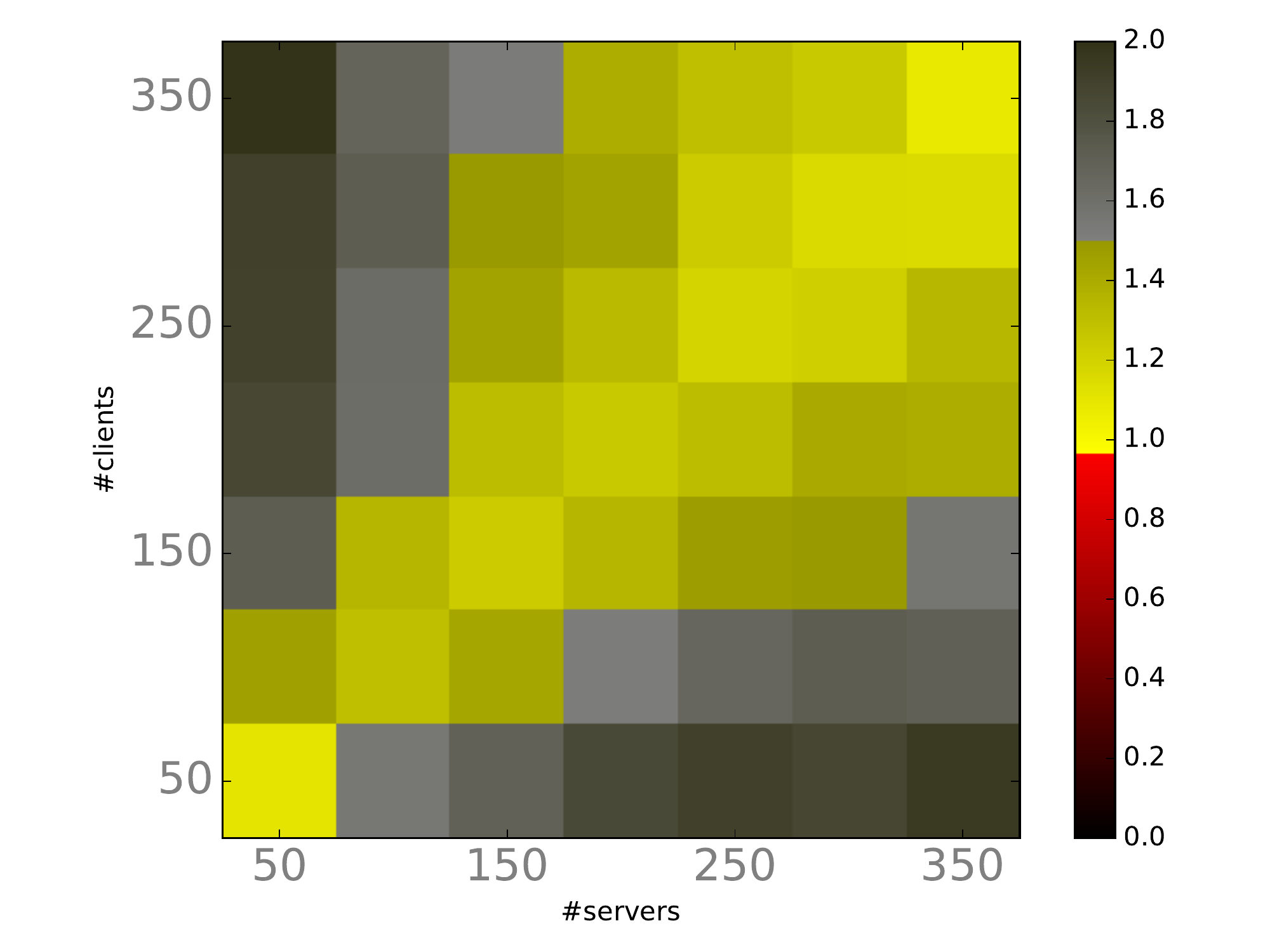}
        \caption{Large runs ECMP}
        \label{fig:ecmp ls large}  
    \end{subfigure}
  \end{minipage}
   \begin{minipage}[b]{0.24\textwidth}
    \centering
    \begin{subfigure}[b]{\linewidth}
    \includegraphics[width=1.05\textwidth]{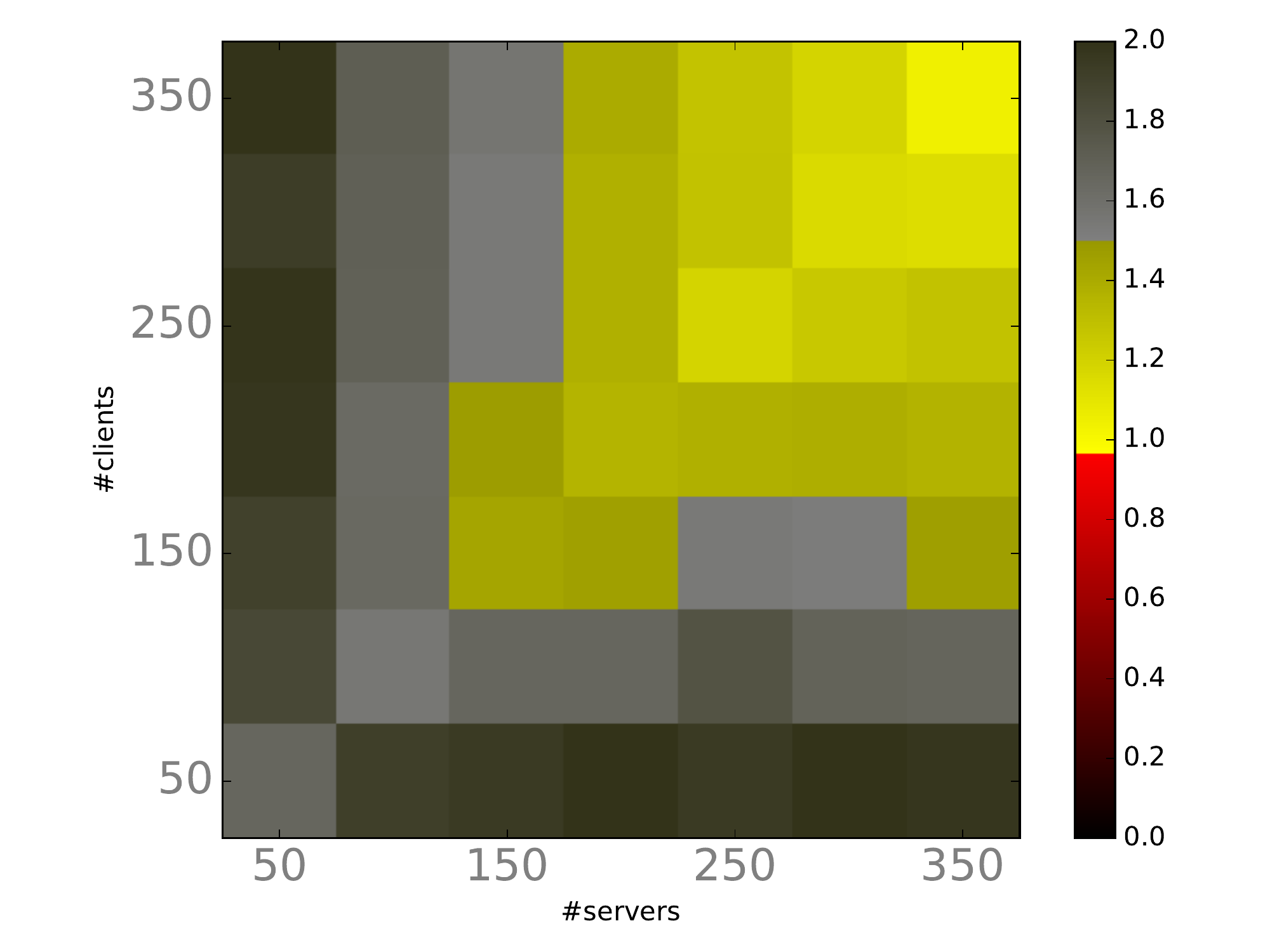}
    \caption{Large runs K Disjoint}
    \label{fig:kdisjoint ls large}
  \end{subfigure}
  \end{minipage}
  \compactcaption{Jellyfish vs Leaf Spine in the C-S model: Different routing schemes on small and large runs.}
    \label{fig:csJfLs}
\end{figure*}

\vskip 0.4 cm
The drastic improvement can be partly attributed to lesser rack oversubscription in an expander (more exit network ports per server in a rack, true for any flat network). Note that unlike a fat tree, the network links of a ToR switch in a flat network carry traffic that originated in the rack as well as traffic that originated elsewhere but routed through that ToR  switch. Nevertheless, all network links can carry local outgoing traffic in the best case. This is especially true for micro bursts where a rack has a lot of traffic to send for a short period of time and traffic is well multiplexed at the network links (very few racks are bursting at any given point). Motivated by these arguments, we introduce the notion of \textbf{Uplink to Downlink Factor} or \textbf{UDF} of a topology. Consider a topology $T$ and a random graph topology $R(T)$ built with the same equipment. For every top of the rack switch that contains servers, we look at the ratio of network ports and the number of server ports call this ratio \textbf{NSR} (for Network Server Ratio). 
We define UDF($T$) as, \[UDF(T) = \frac{NSR(R(T))}{NSR(T)}\]

Intuitively, NSR represents the outgoing network capacity per server in a rack. The UDF represents the best case scenario for the random graph when the outgoing links carry only traffic originated from the rack. It represents an upper bound on the performance of random graph as compared to the baseline topology.

\begin{figure}
\centering
\includegraphics[width=0.43\textwidth]{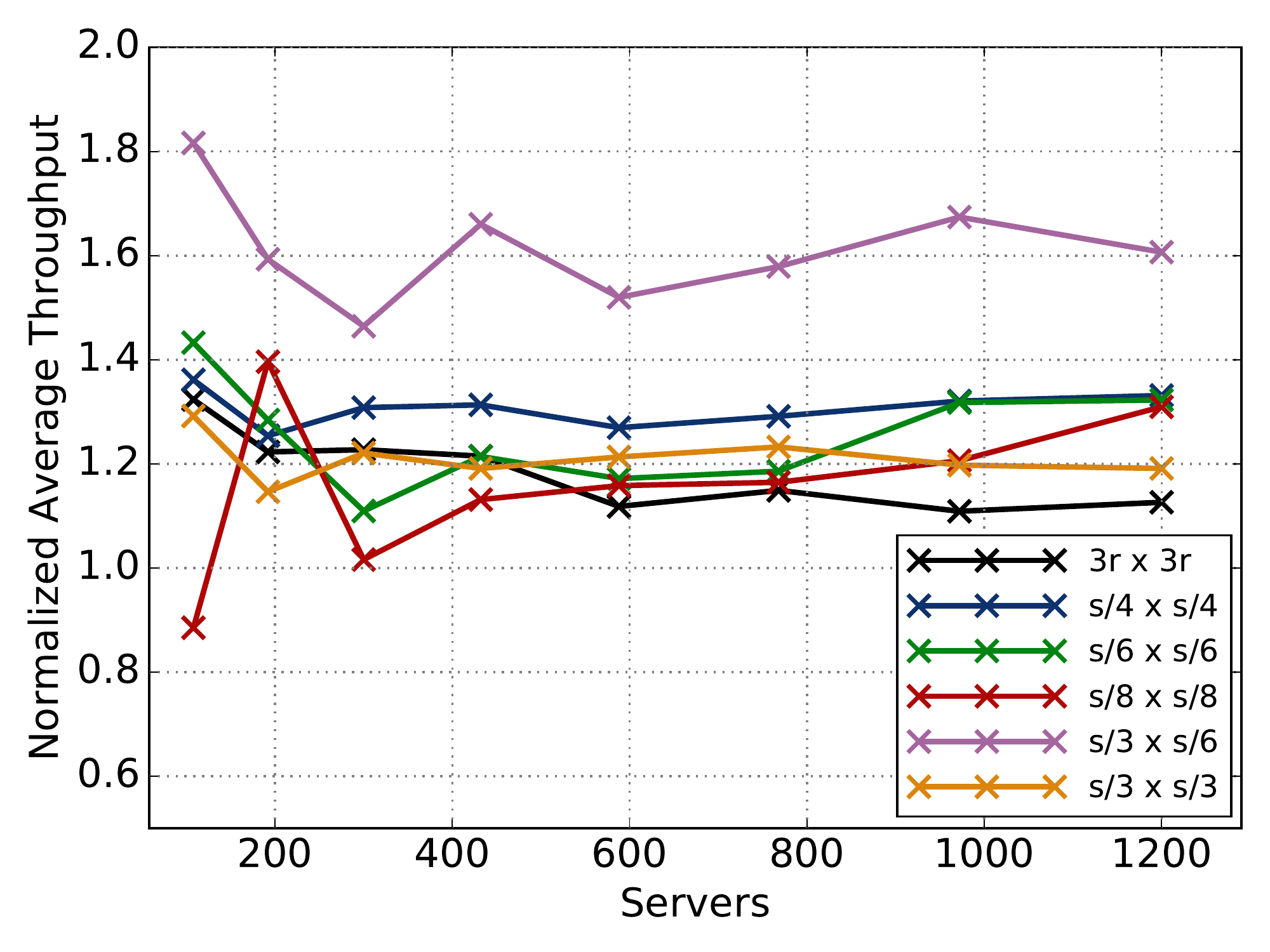}
\compactcaption{Effect of scale: Average throughput in a random graph normalized by the average throughput in a leaf spine. Each line represents a specific point in the $C-S$ model. $r$ represents the number of servers in one rack and $s$ represents the total number of servers in the datacenter. On the x-axis, is the network size corresponding to a leaf spine with oversubscription ratio = 3, supporting that many servers.}
\label{fig:scale experiment}
\end{figure}

We can easily compute the UDF for a fat tree with switch degree $k$.
\begin{align*}
UDF\big(T = FatTree(k)\big) &= \frac{NSR(R(T))}{NSR(T)} \\
&= \frac{\Big(k - \big(\frac{k^3}{4}/\frac{5k^2}{4}\big)\Big)\Big/\big(\frac{k^3}{4}/\frac{5k^2}{4}\big)}{\frac{k}{2}\big/\frac{k}{2}} \\ 
&= 4
\end{align*}

Note that number of server ports in one rack of a random graph built with the same equipment as a Fat tree with switch degree = $k$ is $\big(\frac{k^3}{4}/\frac{5k^2}{4}\big)$ and the number of network ports in one rack is equal to $(k - \text{server ports})$. From Figures~\ref{fig:ecmp cs small} and ~\ref{fig:ecmp cs large}, random graphs achieve throughput that is very close to UDF: 4 in the case of fat trees.

\subsubsection{Note on Fairness}
We note that the differences in average throughput, highlighted in Section~\ref{bandwidth capacity fat tree}, with higher oversubscription ratios are not because the routing inefficiencies get masked by the presence of a large number of flows. If that were the case, some flows would utilize all existing capacity and the throughput distribution would be unfair.   Figure~\ref{fig:fairness} plots the CDF of throughputs of flow for $|C| = 200, |S| = 200$. It can be seen that Jain's Fairness index of random graphs matches that of fat tree with k-disjoint path routing. With shortest path-routing, the throughput distribution of flows is strictly better for the random graph than the fat tree.

\subsection{Comparisons with Leaf Spine} \label{leaf spine comparisons}

\begin{figure*}
  \vskip -0.8 cm
  \begin{minipage}[b]{0.32\textwidth}
    \centering
    \begin{subfigure}[b]{\linewidth}
        \includegraphics[width=1.00\textwidth]{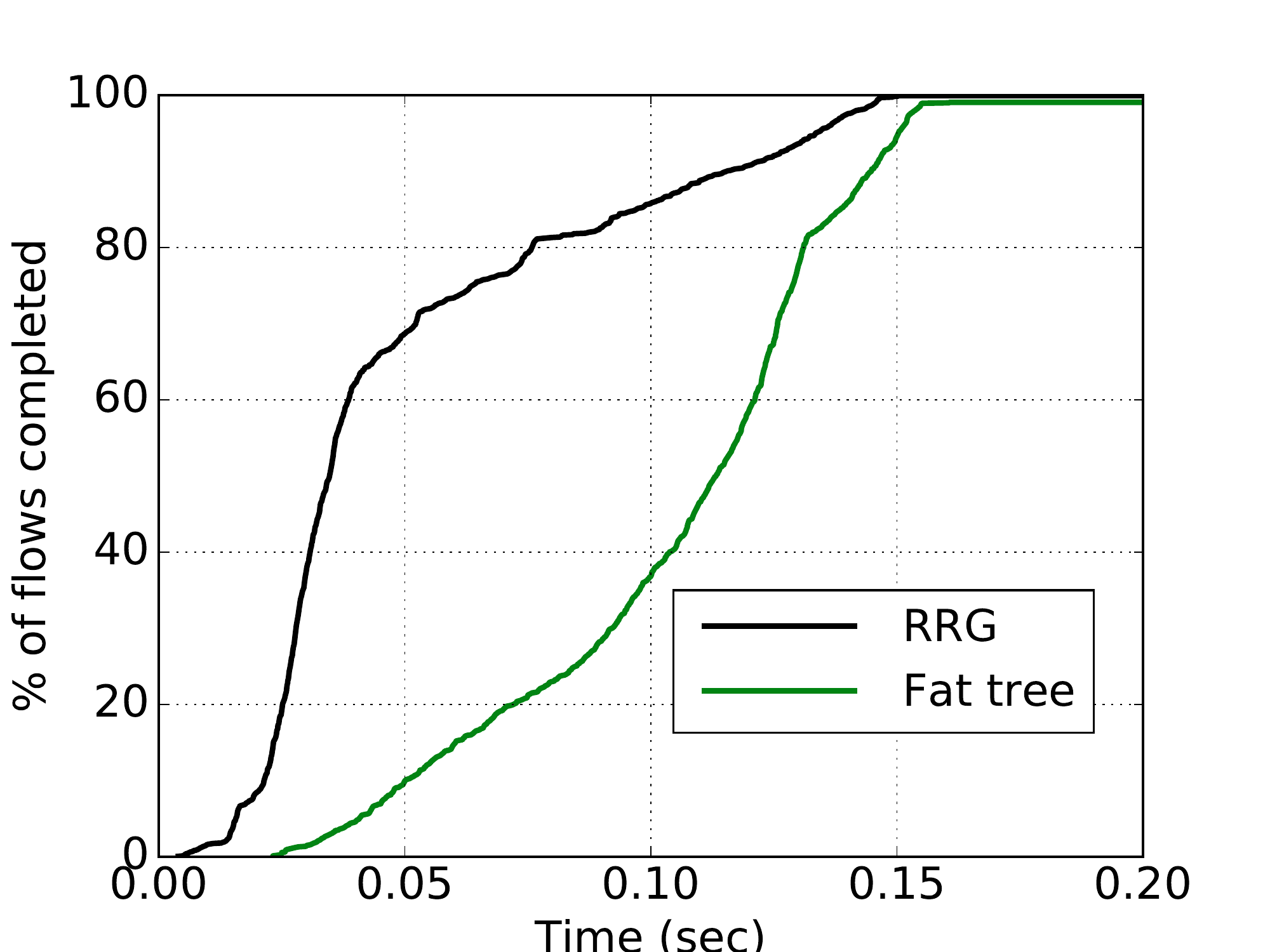}
        \caption{Flow completion time distribution}
        \label{fig:incast flow dist} 
    \end{subfigure}
  \end{minipage}
  \begin{minipage}[b]{0.33\textwidth}
    \centering
    \begin{subfigure}[b]{\linewidth}
    \includegraphics[width=1.00\textwidth]{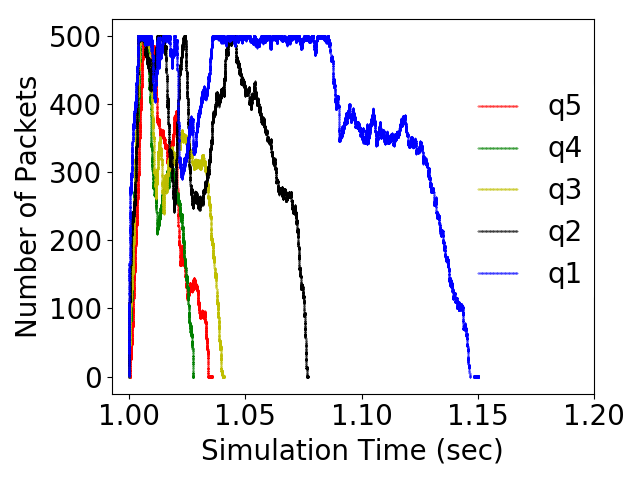}
    \caption{Random graph: 5 busiest queues over time}
    \label{fig:queue buildup rrg}
  \end{subfigure}
  \end{minipage}
\begin{minipage}[b]{0.33\textwidth}
    \centering
    \begin{subfigure}[b]{\linewidth}
    \includegraphics[width=1.00\textwidth]{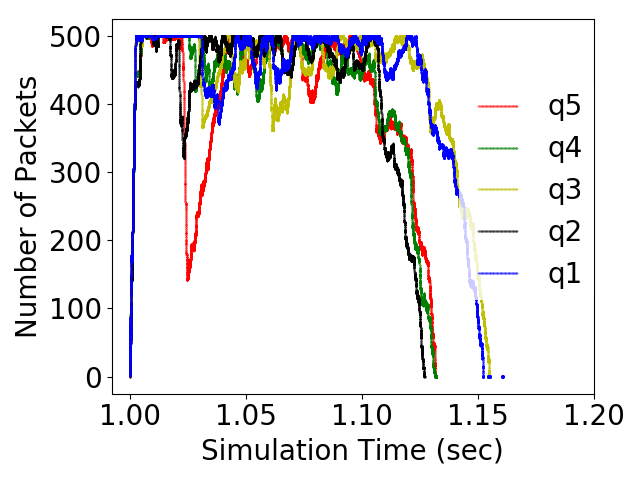}
    \caption{Fat tree: 5 busiest queues over time}
    \label{fig:queue buildup fat}
  \end{subfigure}
  \end{minipage}
  \caption{40:20 Incast experiment to reproduce incast effects due to ToR fan-in at the aggregation layer~\cite{JupiterRising15}. These incasts are a result of many-to-one rack level traffic patterns (in contrast to many-to-one server level incast patterns).}
    \label{fig:incastOutcastFlowDist}
\end{figure*}

\begin{figure*}
  \begin{minipage}[b]{0.33\textwidth}
    \centering
    \begin{subfigure}[b]{\linewidth}
        \includegraphics[width=\textwidth]{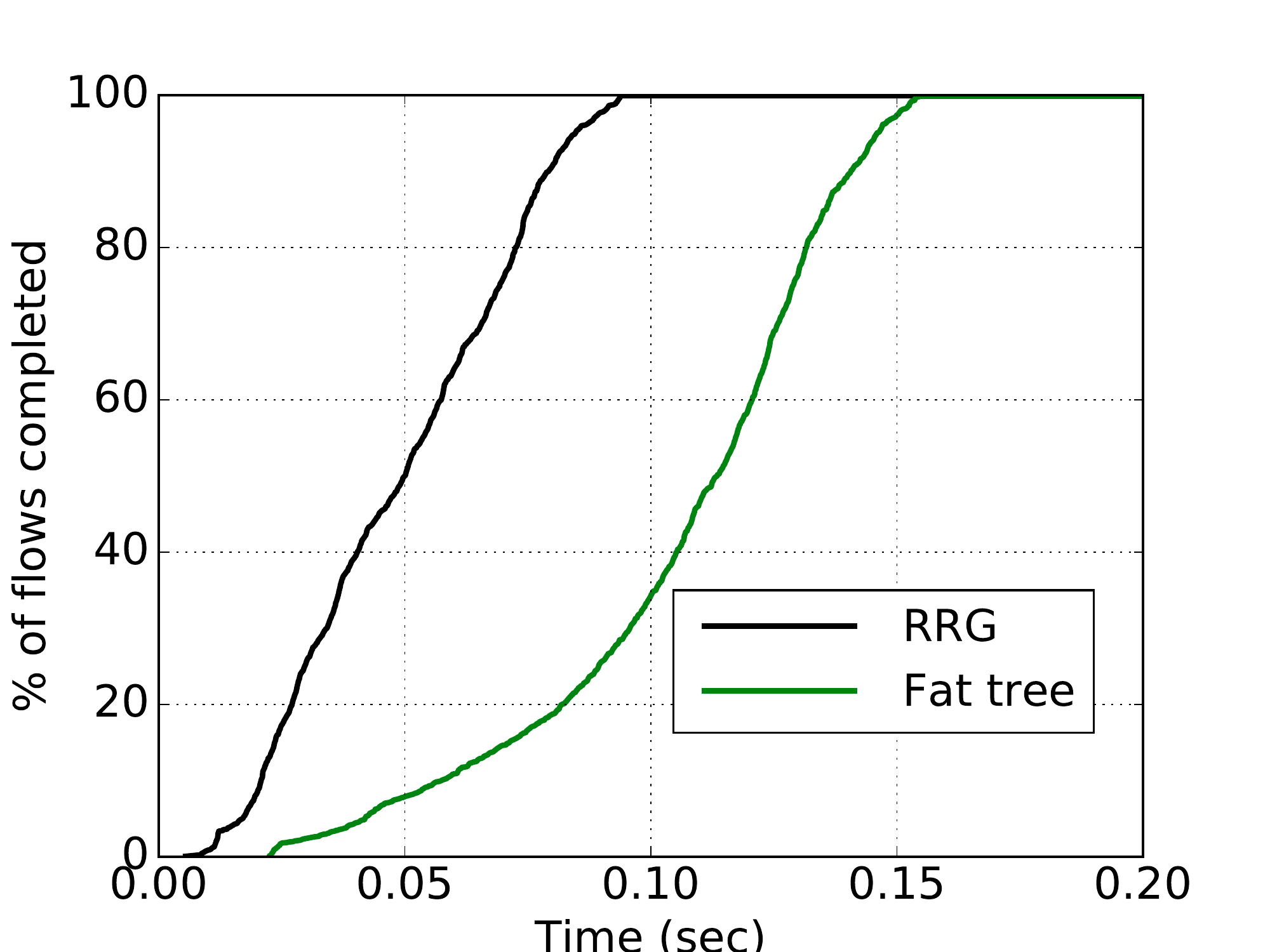}
        \caption{Flow completion time distribution}
        \label{fig:outcast flow dist} 
    \end{subfigure}
  \end{minipage}
  \begin{minipage}[b]{0.33\textwidth}
    \centering
    \begin{subfigure}[b]{\linewidth}
    \includegraphics[width=\textwidth]{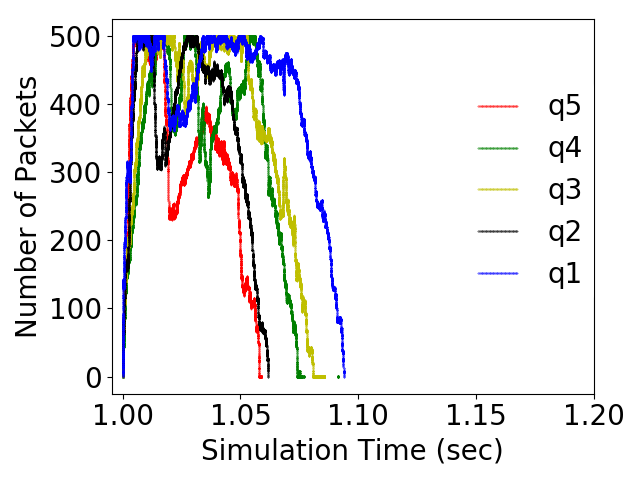}
    \caption{Random graph: 5 busiest queues over time}
    \label{fig:queue buildup rrg outcast}
  \end{subfigure}
  \end{minipage}
\begin{minipage}[b]{0.33\textwidth}
    \centering
    \begin{subfigure}[b]{\linewidth}
    \includegraphics[width=\textwidth]{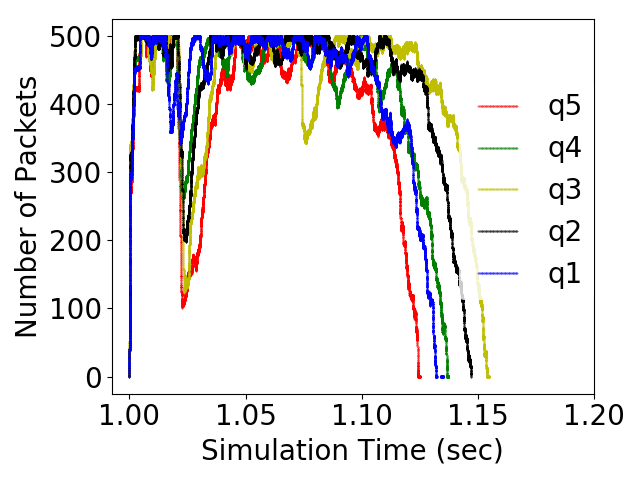}
    \caption{Fat tree: 5 busiest queues over time}
    \label{fig:queue buildup fat outcast}
  \end{subfigure}
  \end{minipage}
  \compactcaption{20:40 Outcast experiment to reproduce outcast effects due to oversubscription of ToR uplinks~\cite{JupiterRising15}}
    \label{fig:incastOutcastFlowDist}
\end{figure*}

Leaf spine topologies are popular for small to medium tiered datacenters consisting of a few tens or hundreds of racks~\cite{AristaBestPractices}. Leaf spine topologies are two-tiered tree based topologies where all lower layer ToR switches (leaves) are connected to all upper layer switches (spines). The servers are distributed across all the leaf switches. The topology provides multiple paths between any source and destination (one path through each spine) and is meant to provide good load balance.

In this section we compare leaf spine topologies to random graphs built with the same equipment. 
Figure~\ref{fig:csJfLs} illustrates the throughput in the $C-S$ model for small and large runs with $x=24$ and $y=8$. Each tile is normalized by the throughput for the leaf spine for that $C$ and $S$. As with fat trees, with ECMP, random graph outperforms the leaf spine topology in all regions except the bottom left red patch of Figure~\ref{fig:ecmp ls small} representing the bandwidth between two racks.

We can compute the UDF for leaf spine switches for arbitrary $x$ and $y$.
We have, \begin{align*}
NSR(T=LeafSpine(x, y)) &= \frac{y}{x}
\end{align*}
For the corresponding random graph $R(T)$ built with the same equipment,
\begin{align*}
NSR(R(T)) &= \frac{(x+y) - \text{server ports per switch}}{\text{server ports per switch}}\\
&= \frac{(x+y) - \big(x(x+y)/(x+2y))}{x(x+y)/(x+2y)}\\
&= \frac{2y}{x} \\
\text{Thus, } \quad UDF\big(T&=LeafSpine(x, y)\big) = \frac{NSR(R(T))}{NSR(T)} = 2 
\end{align*}
It can be seen from Figures~\ref{fig:ecmp ls small}, ~\ref{fig:kdisjoint ls large} that expanders indeed touch the upper bound of $UDF=2$, against leaf spines, in the $C-S$ model. Observe that the $UDF$ of a leaf spine network is independent of the number leaf and spine switches. If a network has fewer spines and more leaves, the number of servers per rack are fewer but the aggregate uplink bandwidth at the ToRs is also lesser. These two factors cancel each other and hence, the UDF remains constant.

\begin{figure*}
 \vskip -0.1 cm
    \begin{minipage}[b]{0.28\textwidth}
    \centering
    \hskip 0.5 cm
    \includegraphics[width=0.9\textwidth]{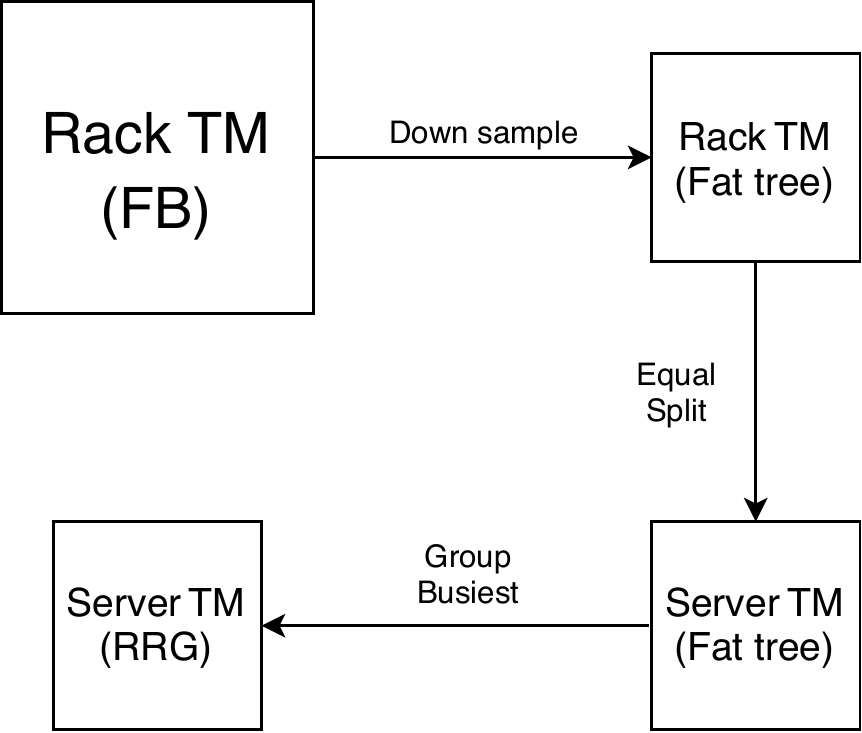}
    \compactcaption{FB trace downsampling method}
    \label{fig:fb method}
  \end{minipage}
  \hskip 0.5 cm
  \begin{minipage}[b]{0.3\textwidth}
    \centering
  \includegraphics[width=\textwidth]{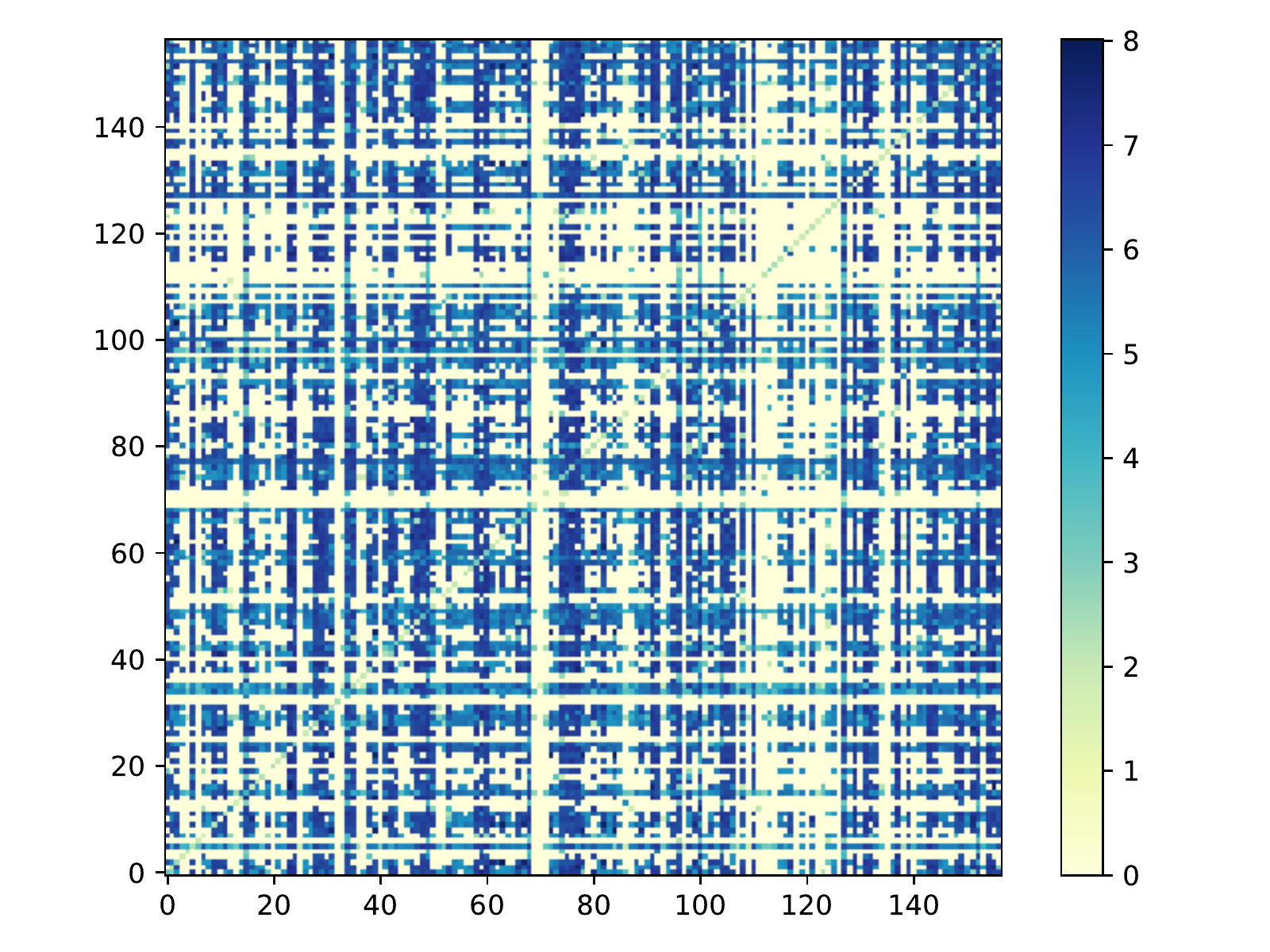}
  \compactcaption{Rack level traffic matrix for the chosen FB cluster (log base 10)}
  \label{fig:fb rack tm}
  \end{minipage}
  \hskip 0.5 cm
   \begin{minipage}[b]{0.31\textwidth}
    \centering
    \includegraphics[width=0.9\textwidth]{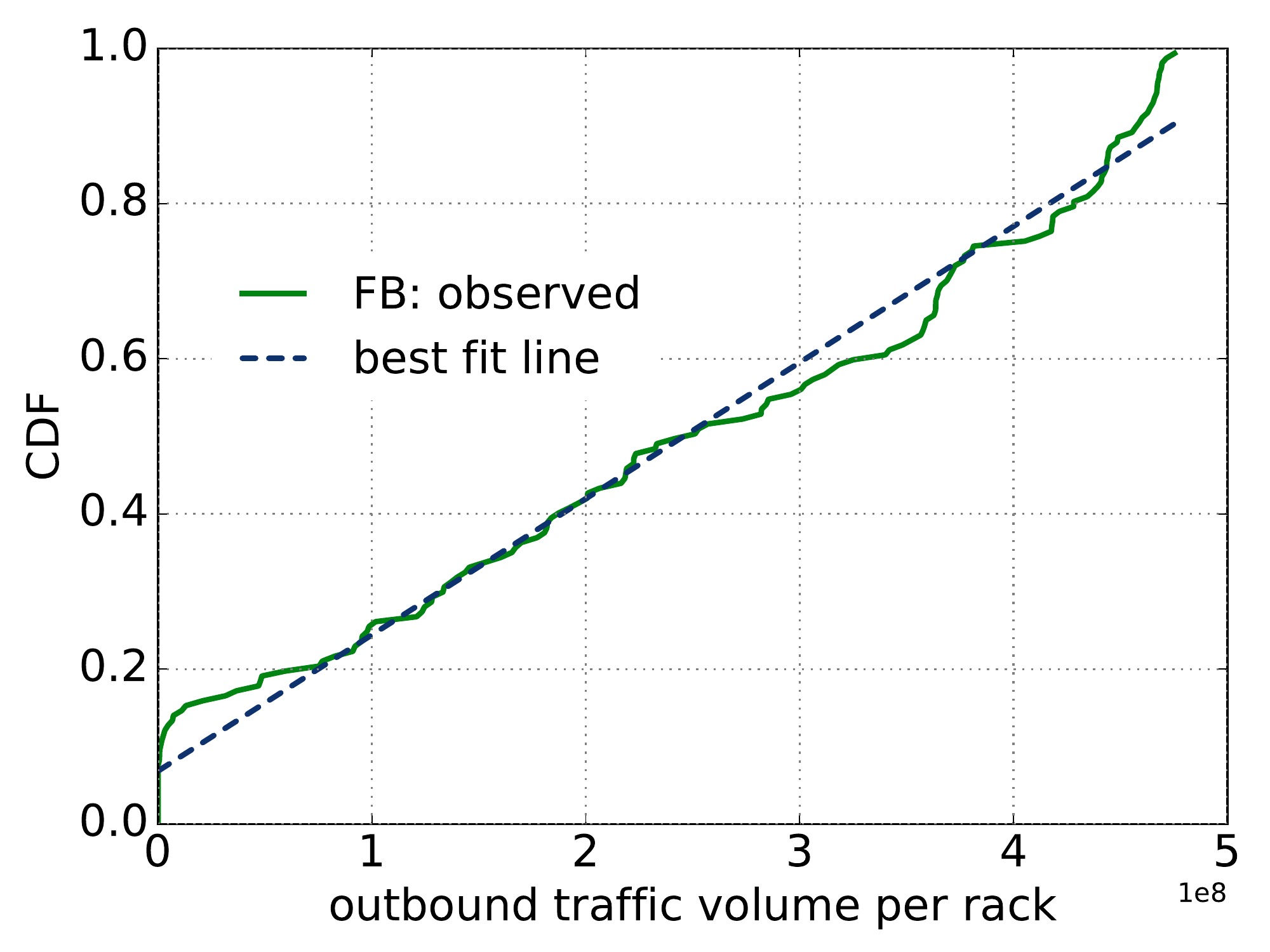}
    \compactcaption{The outgoing volume of traffic from a rack in the FB cluster follows a near-uniform distribution.}
    \label{fig:fb distr}
  \end{minipage} 
\end{figure*}

\begin{figure*}
  \hskip -0.1 cm
\begin{minipage}[b]{0.3\textwidth}
    \centering
    \begin{subfigure}[b]{\linewidth}
    \includegraphics[width=\textwidth]{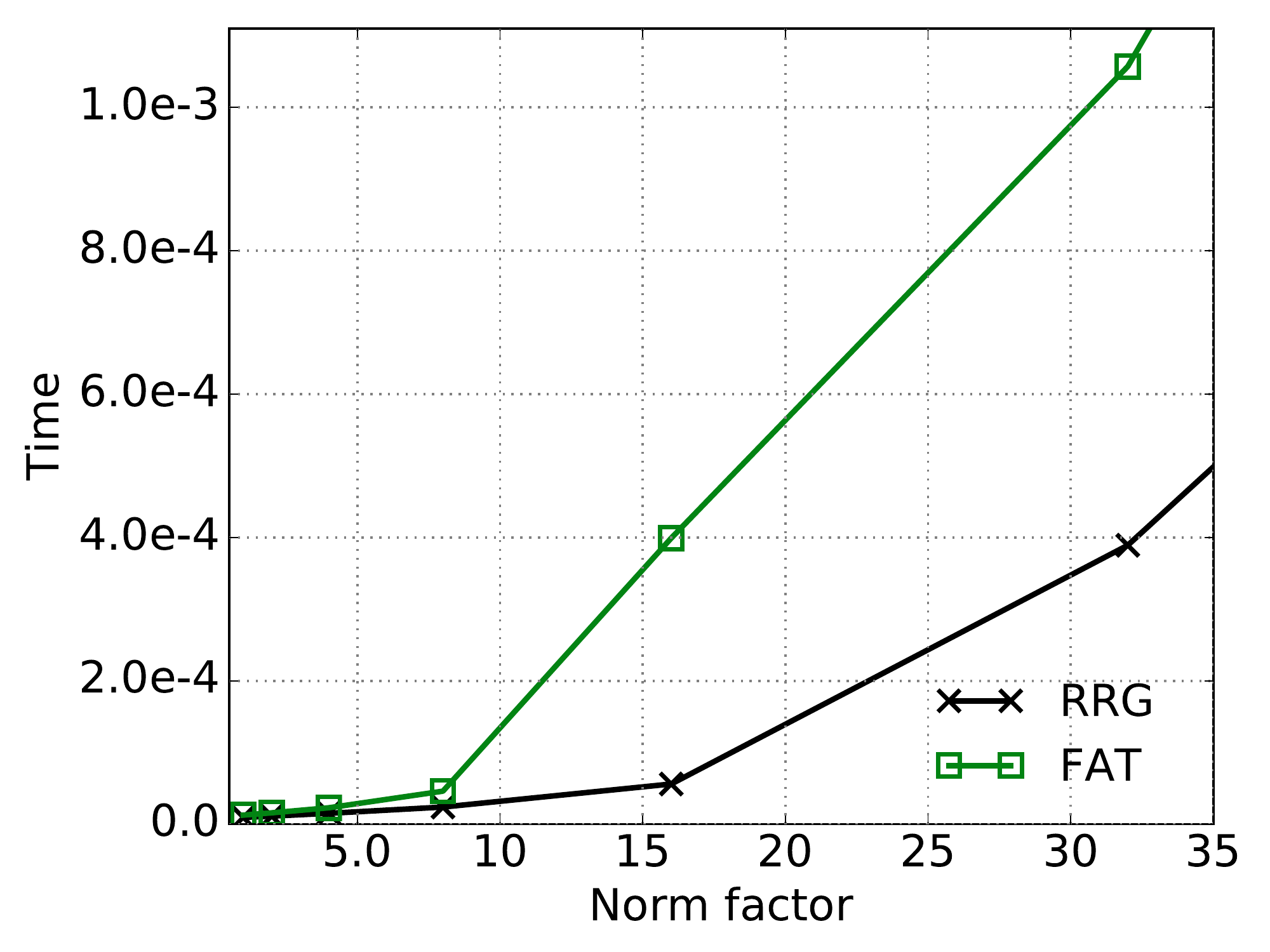}
    \compactcaption{50 \%ile FCT}
    \label{fig:fb 50}
  \end{subfigure}
  \end{minipage}
  \hskip 0.25cm
   \begin{minipage}[b]{0.3\textwidth}
    \centering
    \begin{subfigure}[b]{\linewidth}
    \includegraphics[width=\textwidth]{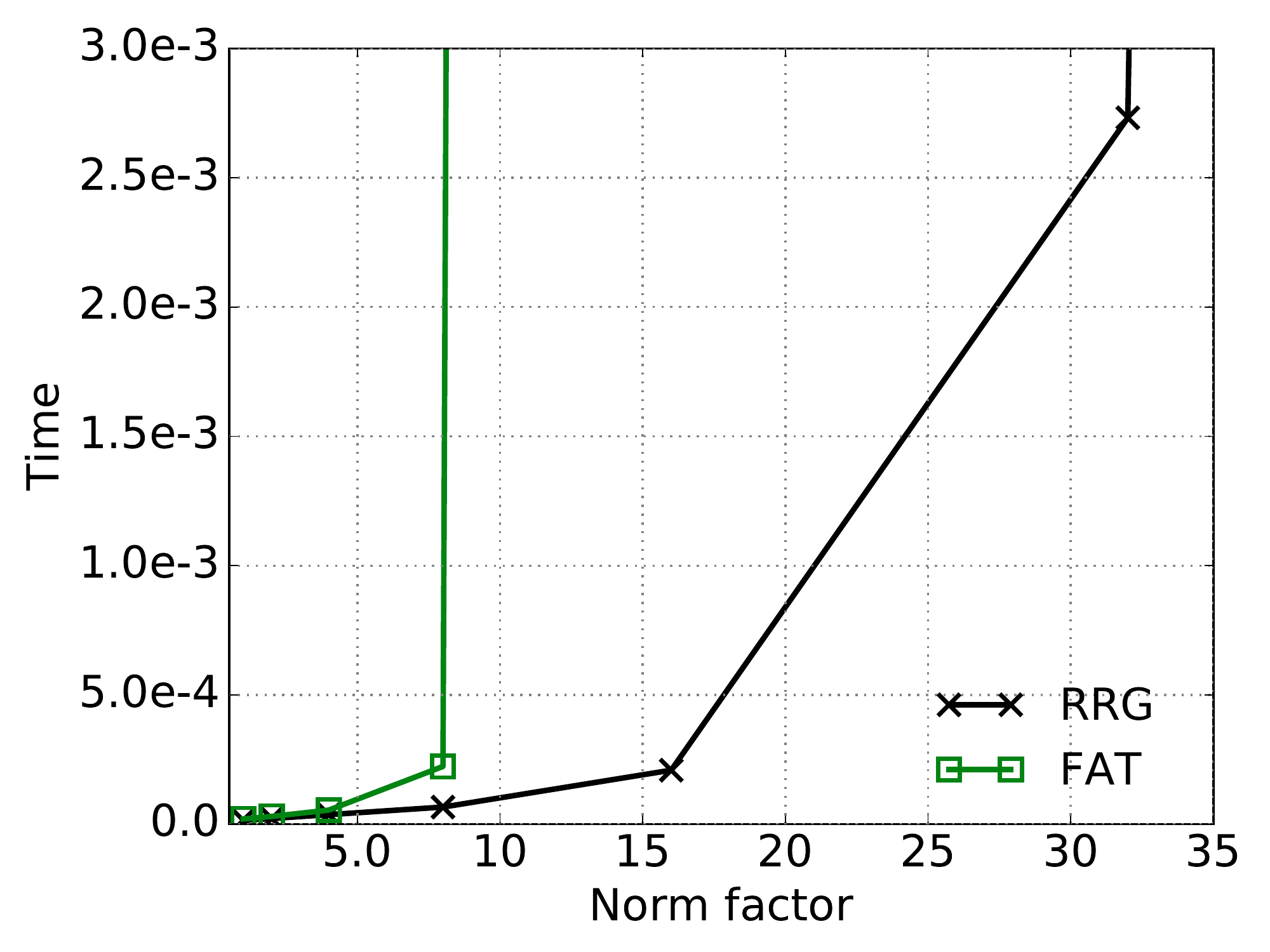}
    \compactcaption{90 \%ile FCT}
    \label{fig:fb 90}
    \end{subfigure}
  \end{minipage} 
  \hskip 0.25cm
  \begin{minipage}[b]{0.3\textwidth}
    \centering
    \begin{subfigure}[b]{\linewidth}
    \includegraphics[width=\textwidth]{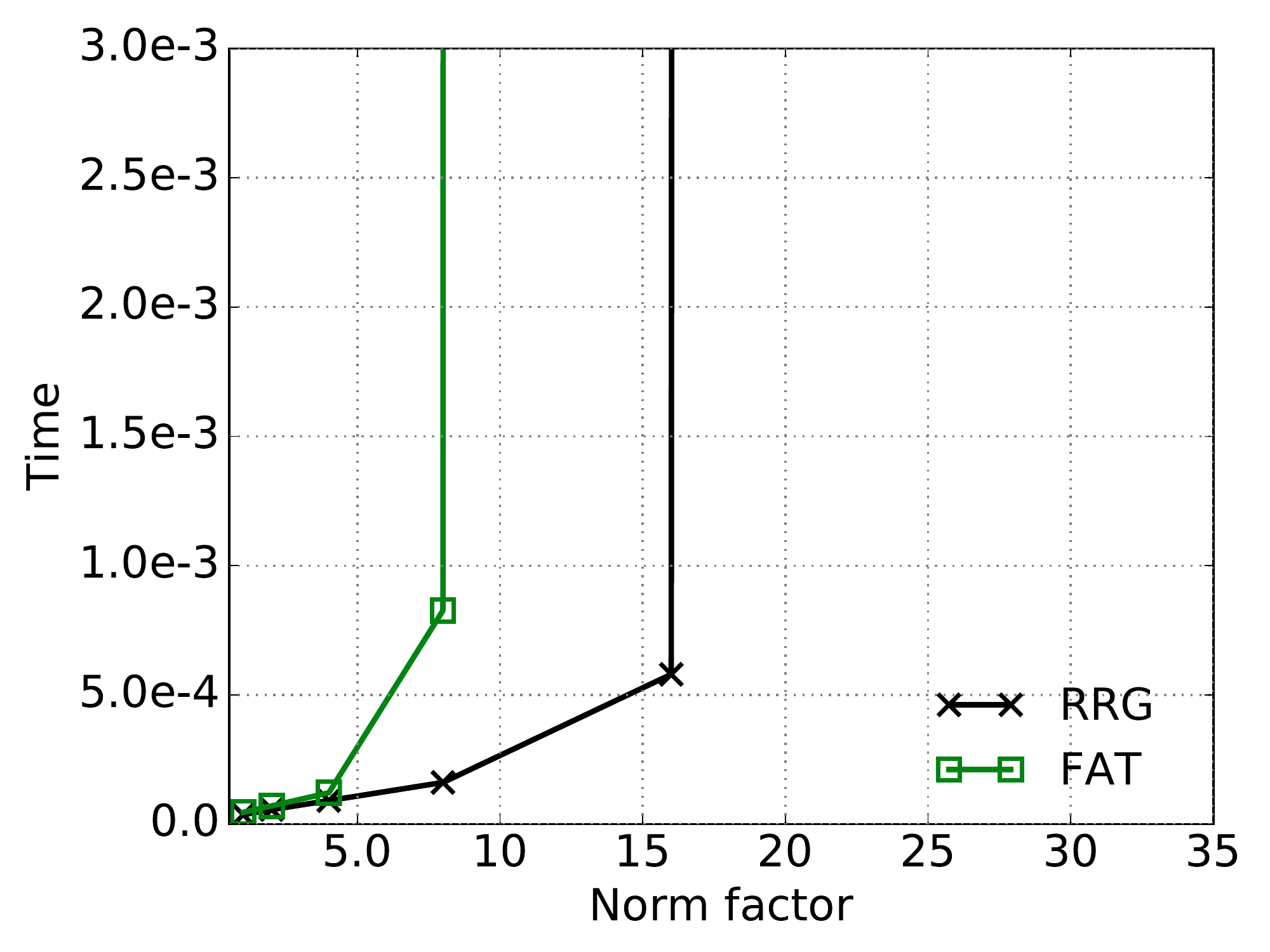}
    \compactcaption{99 \%ile FCT}
    \label{fig:fb 99}
    \end{subfigure}
  \end{minipage}
\caption{Flow completion times at various percentiles for the FB traffic matrix. On the x-axis is the norm factor and hence network load increases along the right of the x-axis.}
\end{figure*}

\subsubsection{Effect of scale}
To examine if the advantages of expanders hold at all scales, we experimented with increasing sizes of leaf spines. For traffic patterns, we chose multiple points in the $C-S$ model as a representative set for the entire space. Figure~\ref{fig:scale experiment} plots the average throughput in the expander (normalized by the average throughput in the leaf spine for that size). It can be seen that the advantages of expanders over leaf spines in the $C-S$ model extend to all scales.

\section{Flow completion time} \label{extreme case analysis}
Traffic in datacenters is often bursty and volatile~\cite{vl209,JupiterRising15}. Incast (many-to-one) and outcast (one-to-many) patterns occur frequently in data processing pipelines. In the following subsection, we discuss multiple experiments in the $C-S$ model, with specific values of $C$ and $S$, to replicate different types of bursty incast and outcast situations. In each case, we use flow completion time as the performance metric.
Every client in $C$ sends a 100 KB flow to every server in $S$. To get maximum burstiness, we start all flows at the same time. 

\subsection{Incasts and Outcasts}
Incasts and outcasts are challenging conditions for the network to deal with. Incasts (and outcasts) could be grouped into two categories~\cite{JupiterRising15}: (1) where the bottleneck is at the host and (2) bottleneck is in the network. Not much can be done by the network for the first kind, when the bottleneck is at the hosts. These type of incasts are mainly many-to-one server level patterns where the bottleneck is due to the traffic fanning in
from the ToRs to a single host~\cite{JupiterRising15}. These constitute majority of the packet drops- 62.8\% according to the study in~\cite{JupiterRising15}. We ran many-to-one (at the server level) incast patterns in ns3 and got roughly the same distribution of flow completion times for both fat trees and the random graph, which is expected since the bottleneck is at the hosts and not the network.

The second type of packet drops are caused mainly due to the ToR fan-in at the aggregation layer (35.9\% of the packet drops~\cite{JupiterRising15}).  These can occur for instance, if many servers are trying to send traffic to the same rack (but possibly different servers in the rack). To reproduce such incasts, we design an experiment involving 40 servers (2 racks in the fat tree) sending traffic to 20 other servers at the same time ($|C| = 40$, $|S| = 20$ in the $C-S$ model). We used the following settings for our experiments in ns3:

\begin{center}
\vskip -0.2 cm
\begin{tabular}{ l |  c  }
  Queue size & 500 packets (droptail) \\ \hline
  Flow size & 100 KB \\ \hline
  Link speed & 1 Gbps \\ \hline
  Link delay & 1 $\mu$sec \\ \hline
  Packet size & 1 KB \\ \hline
  $\text{RTO}_{\text{min}}$ & 5 ms \\ \hline
  Connection Time Out & 50 ms 
\end{tabular}
\end{center}

\begin{figure*}
 \vskip -0.8 cm
\begin{minipage}[b]{0.48\textwidth}
\centering
\includegraphics[width=0.75\textwidth]{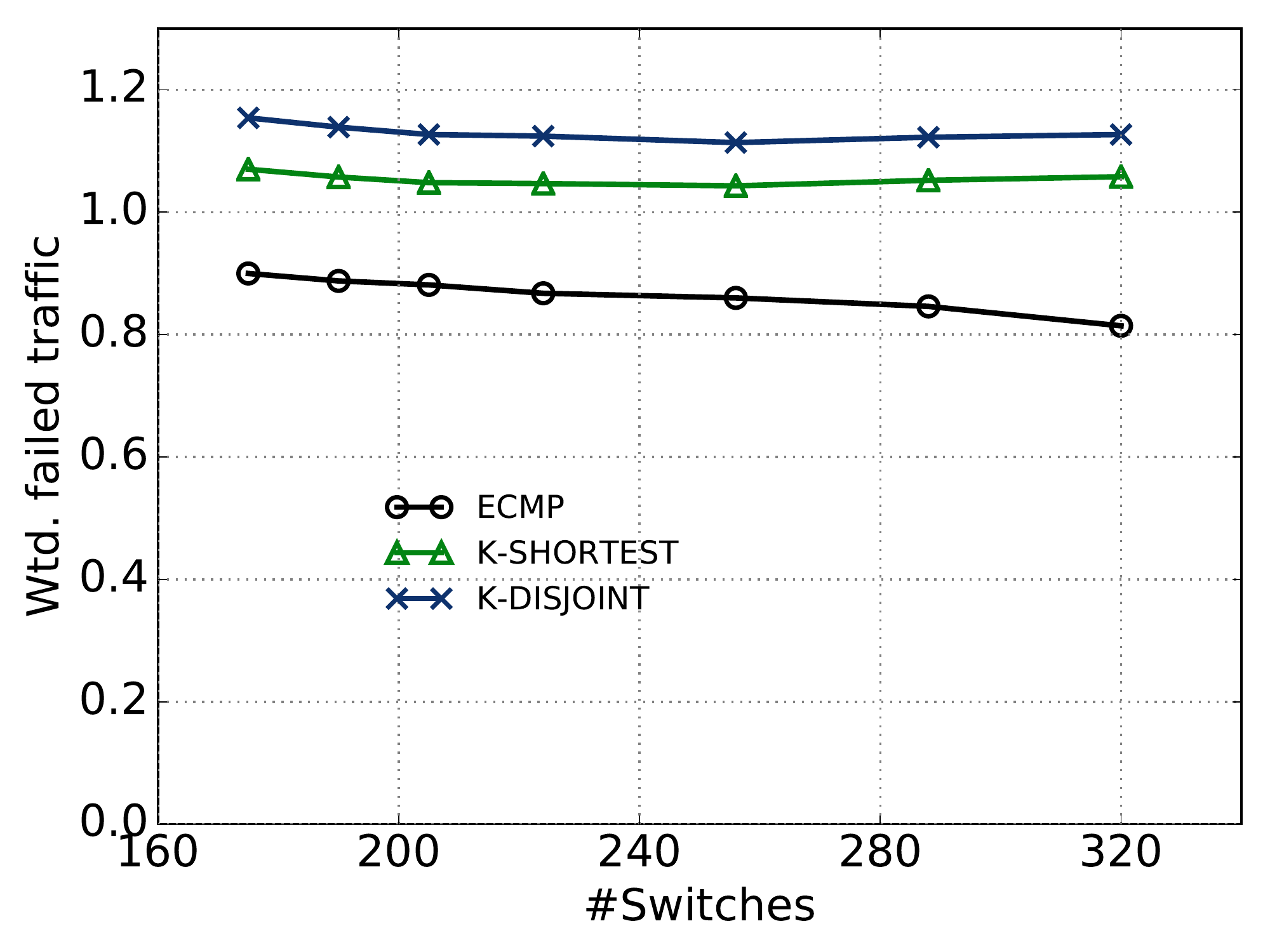}
\compactcaption{Transient traffic loss in a random network due to a single link failure, normalized by traffic loss in a fat tree.}
\label{fig:linkFailure}
\end{minipage}
\hskip 0.6cm 
\begin{minipage}[b]{0.48\textwidth}
\centering
\includegraphics[width=0.75\textwidth]{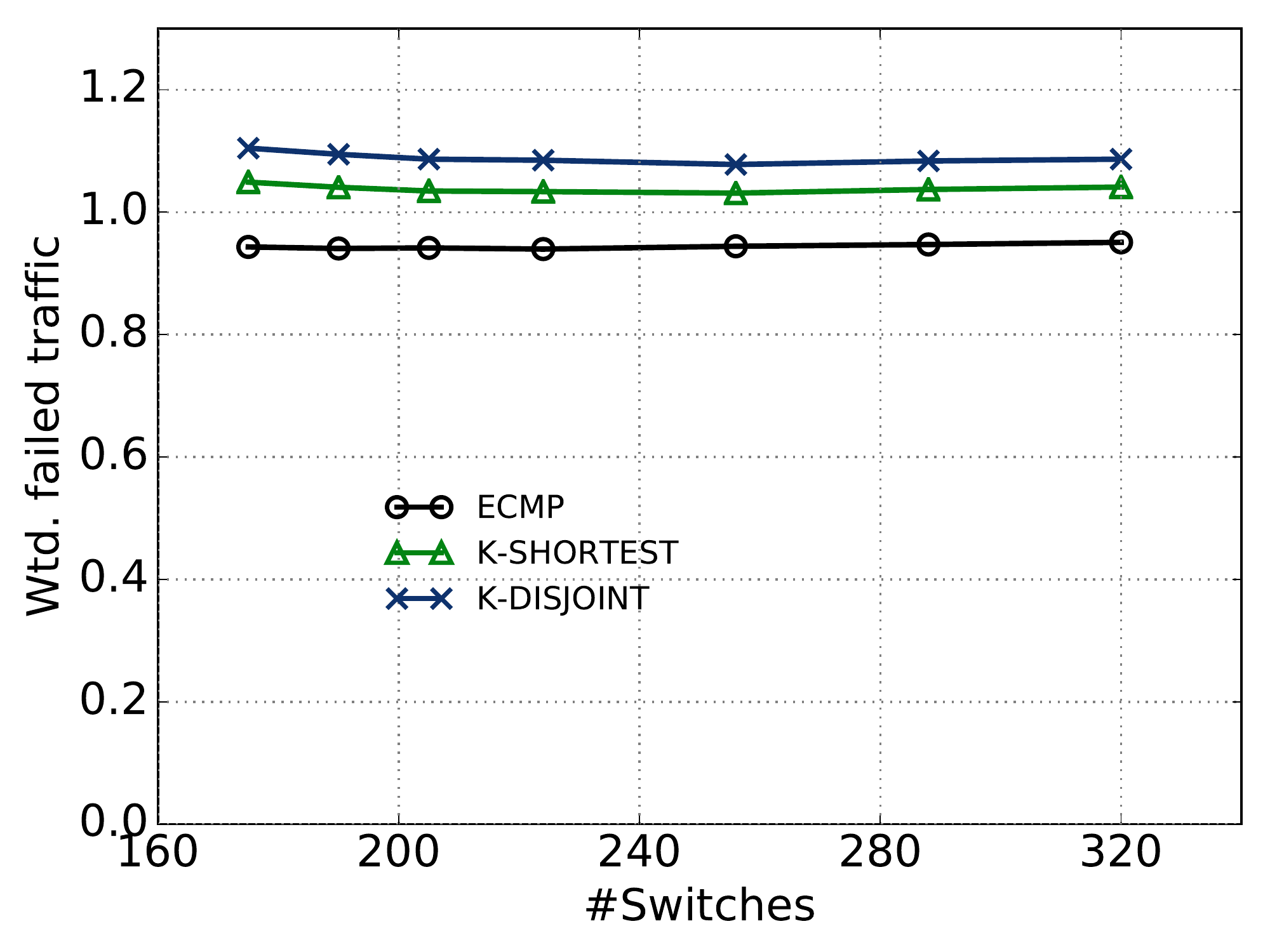}
\compactcaption{Transient traffic loss in a random network due to a single switch failure, normalized by traffic loss in a fat tree.}
\label{fig:switchFailure}
\end{minipage}
\end{figure*}

We used a larger queue size to ensure that the packet loss is not too high to render the network unusable.
 Figure~\ref{fig:incast flow dist} shows the distribution of flow completion times of the $40 \times 20 = 800$ flows involved in the experiment. Clearly, the expander network results in significantly better flow completion times (by more than $2\times$ at the median). 
Figure~\ref{fig:queue buildup rrg} and ~\ref{fig:queue buildup fat} illustrate the queue lengths at the 5 most loaded link ports with time on the x-axis. We verified that all five of the most loaded queues in the fat-tree were at the downlink from the aggregation layer to the ToR. As can be seen in the figure, the traffic jam gets cleared more quickly in the expander graph (roughly twice as quickly except at the busiest link). 

The third kind of packet drops are a result of oversubscription of ToR uplinks (around 1.3\% as per~\cite{JupiterRising15}). Expanders, being flat networks, are better suited to handle such bursts as each rack supports fewer servers. To reproduce such outcast patterns, we ran an experiment with 20 servers (1 rack in the fat tree) sending traffic to 40 other servers ($|C| = 20$, $|S| = 40$ in the $C-S$ model). Figure~\ref{fig:outcast flow dist} shows the distribution of flow completion times for 800 flows. Figures~\ref{fig:queue buildup rrg outcast} and ~\ref{fig:queue buildup fat outcast} show the queue lengths over time for five most loaded links. All five of the most loaded links in the fat tree were ToR uplinks. It can be seen that, in this case too, the queues decline considerably more quickly for the expander network.

\subsection{Datacenter traffic trace}
The $C-S$ model served our purpose for characterizing throughput and latencies for a wide range of cases and identifying cases where a topology does well and where it doesn't. In this section, we describe experiments with real traffic patterns based on packet traces from Facebook's publicly available dataset~\cite{fbTrafficPattern}. 

\subsubsection{Instrumentation} \label{fb tm method}
The Facebook dataset contains packet logs from multiple clusters, we focus on one of the clusters in the datacenter so that we have a complete traffic matrix for that cluster.  We obtain aggregate traffic volumes across racks in that cluster (Figure~\ref{fig:fb rack tm}), disregarding traffic going out of the cluster. Because of the high sampling ratio of 1:30000, we couldn't derive any meaningful conclusions about the flow size distributions. To get the number of racks down to the number of racks in our fat tree topology (50 ToRs), we picked the busiest 50 racks out of 157 racks in the FB cluster to get a rack level traffic matrix (see Figure~\ref{fig:fb rack tm}) for the fat tree. We found that the outgoing traffic volume from a rack has a near uniform distribution (Figure~\ref{fig:fb distr}).
We then split the traffic across a pair of racks equally across all pairs of servers in the two racks. This lead to a server level traffic matrix for the fat tree. To get a server level traffic matrix for the random graph from the server level matrix of the fat tree, we packed the busiest servers into a rack of the random graph, while randomly chosing racks. This process is illustrated in Figure~\ref{fig:fb method}. 
We run this server level traffic matrix in the ns3 simulator~\cite{ns3}, by randomly starting flows across a span of 10 ms. We multiplied each entry in the traffic matrix by a normalizing factor, tuned so that the traffic matrix is neither too small nor too large. We used the following settings for our experiments

\begin{center}
\vskip -0.2 cm
\begin{tabular}{ l |  c  }
  Queue size & 100 packets (droptail) \\ \hline
  Link speed & 1 Gbps \\ \hline
  Link delay & 1 $\mu$sec \\ \hline
  Packet size & 1 KB 
\end{tabular}
\end{center}

\subsubsection{Expanders degrade more gracefully}
By increasing the load on the network, we assess how quickly the network degrades. After a certain load threshold, the network expectedly collapses owing to high packet loss. Things get significantly worse when TCP's control packets start getting dropped. The network is no longer usable at this point. From a design point of view, aside from performance, one also needs to know the network's limits for capacity building for worst case scenarios. 

Figures~\ref{fig:fb 50},~\ref{fig:fb 90} and  ~\ref{fig:fb 99} illustrate flow completion times for the traffic pattern described in Section~\ref{fb tm method}. On the x-axis is the normalizing factor and hence the network load increases along the  x-axis. The results suggest that not only do expanders have a better flow completion time than fat trees, but the network degrades more slowly and gracefully in expanders. In fact, expanders can support more than 2x the network load before degrading beyond being usable.




\section{Traffic Loss on Failures}
Fault recovery in modern data centers~\cite{Walraed-Sullivan2013} is largely based on reconvergence of routing algorithms, resulting in traffic loss till routing tables are recomputed. To alleviate traffic loss, switches react locally to failures by avoiding failed next hops. This is possible only if the routing table has alternate next hops for a given destination. 
For instance, in a fat tree, a source to destination path comprises of an upward path from the source to a core switch and a downward path from the core switch to the destination. Alternate next hops are available only for links in the upward path. If a link on a downward path fails, then a packet traversing that path would get dropped. 
We quantify traffic loss due to a single link failure and a single switch failure till routing tables are recomputed. We assume local convergence, that is, switches are capable of avoiding a failed link if alternate next hops are available for that destination.

\begin{figure*}
 \vskip -0.4 cm
    \begin{minipage}[b]{0.3\textwidth}
    \centering
    \begin{subfigure}[b]{\linewidth}
    \hskip 0 cm
    \includegraphics[width=0.9\textwidth,height=0.7\textwidth]{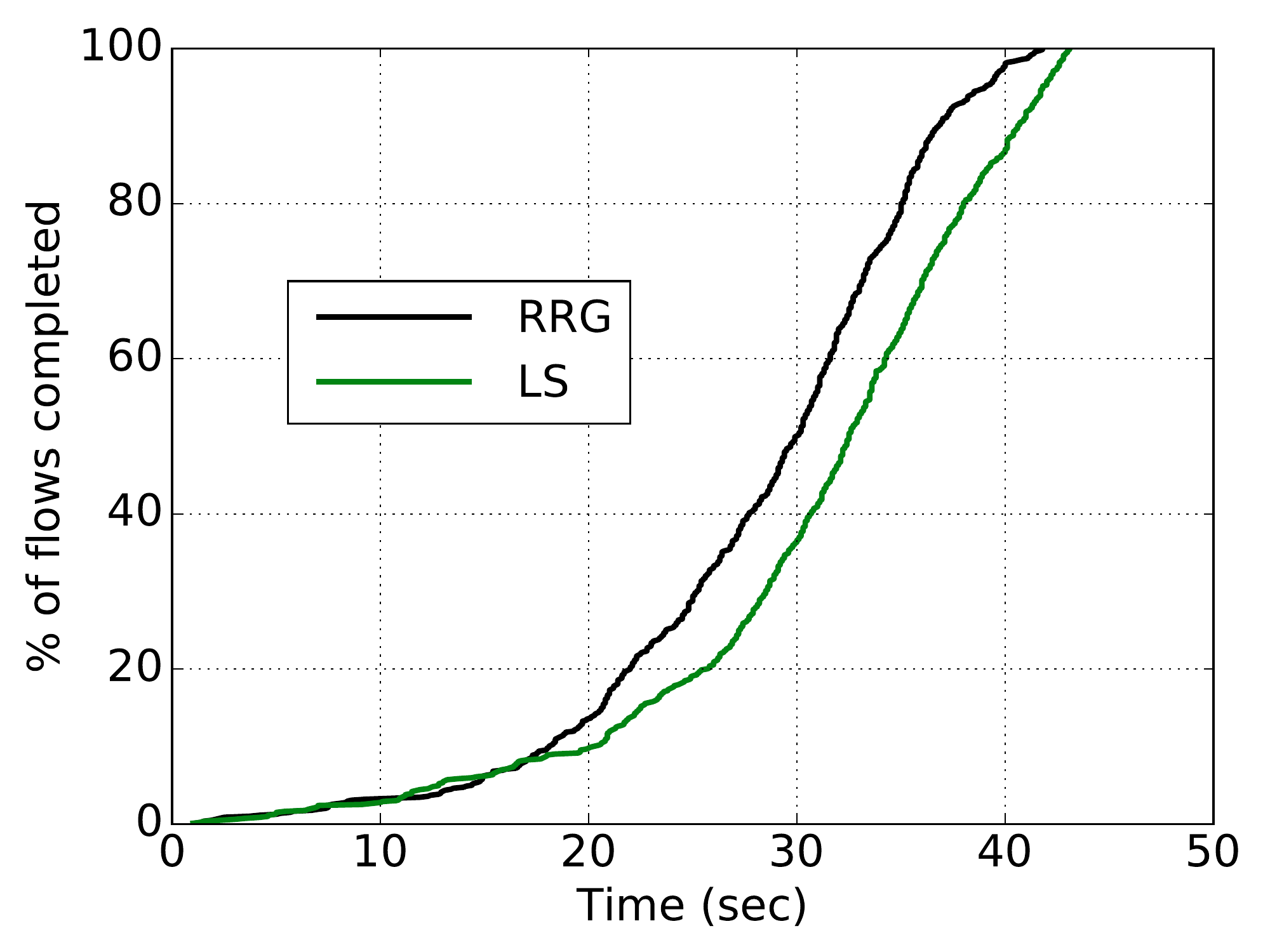}
    \compactcaption{Distribution of flow completion times for a 100 MB shuffle operation on the testbed for a leaf spine (3 spines, 7 leaf ToRs, 28 servers) and an equivalent random graph (RRG). }
    \label{fig:hardware_shuffle_flowdist}
  \end{subfigure}
  \end{minipage}
  \hskip 0.35 cm
  \begin{minipage}[b]{0.3\textwidth}
    \centering
    \begin{subfigure}[b]{\linewidth}
  \includegraphics[width=1.15\textwidth]{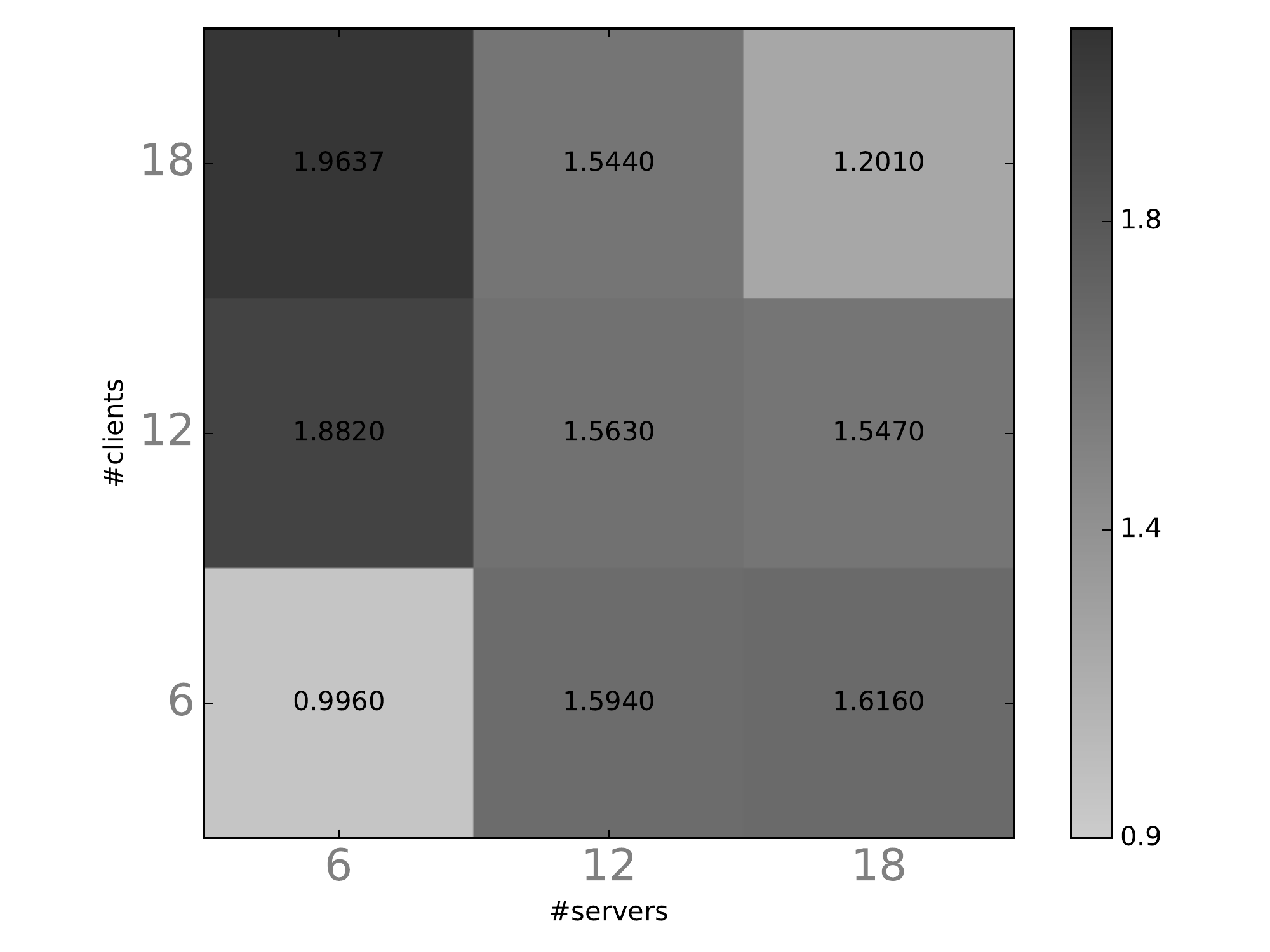}
  \compactcaption{Simulated on htsim}
    \label{fig:cs_simulation}
  \end{subfigure}
  \end{minipage}
  \hskip 0.45 cm
  \begin{minipage}[b]{0.3\textwidth}
    \centering
    \begin{subfigure}[b]{\linewidth}
    \includegraphics[width=1.15\textwidth]{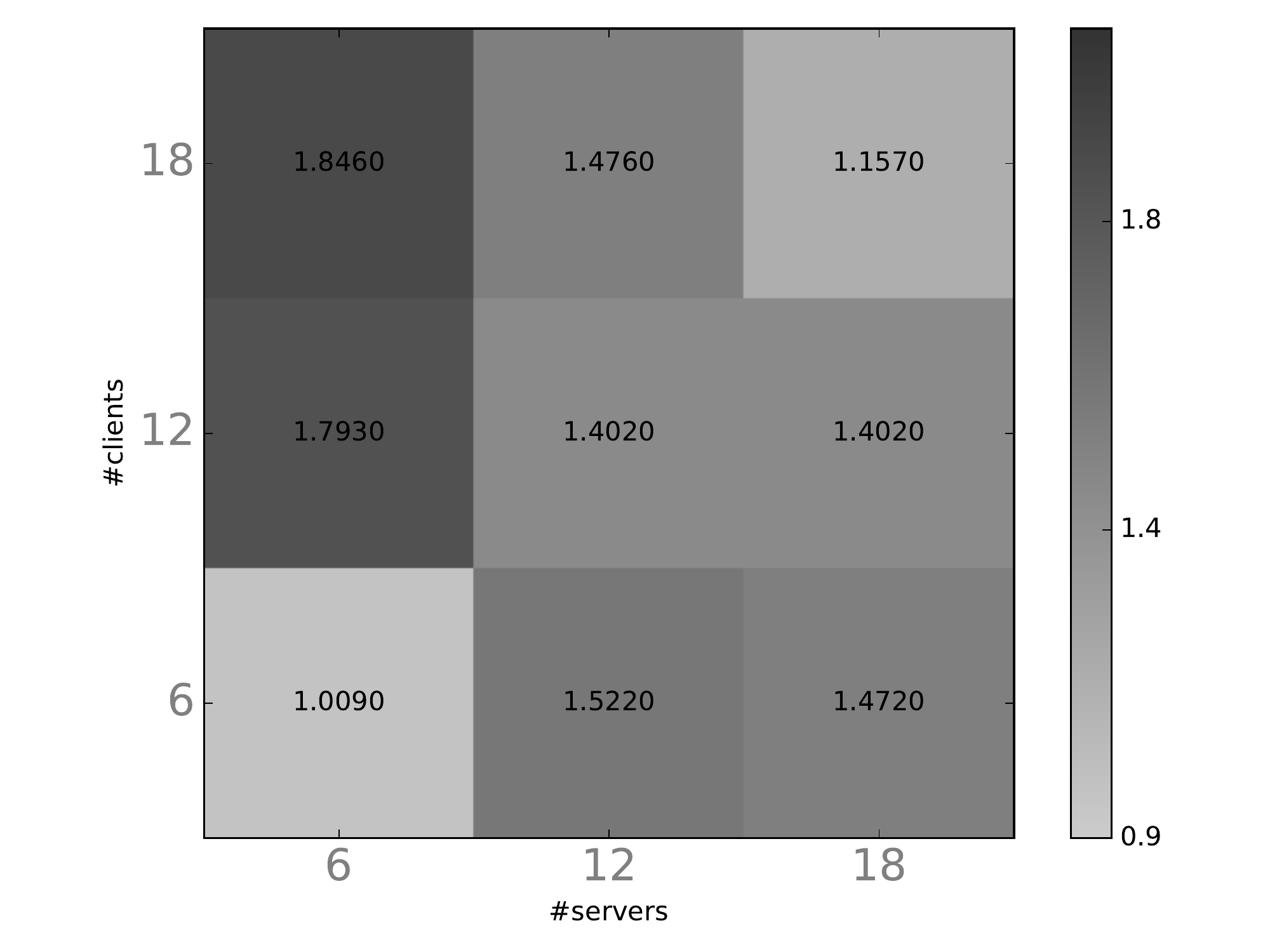}
    \compactcaption{Hardware testbed}
    \label{fig:cs_hardware}
  \end{subfigure}
  \end{minipage} 
  \label{fig:hardware_experiments}
  \caption{Hardware testbed experiments. Figure (a) illustrates the flow completion times for a shuffle operation. Figures (b) and (c) illustrate experiments in the htsim simulator and the hardware testbed respectively, representing coarse-grained C-S model runs for LeafSpine(x=6, y=2) consisting of 48 servers and an equivalent random graph. Each point is the ratio of the average throughput in the random graph divided by the average throughput in leaf spine.}
\end{figure*}

\subsection{Model}
We model traffic loss as follows. Assume one unit of traffic flow between a random source and destination pair. Further, assume that the path for the flow, is chosen according to an ideal hash function at each switch, that hashes based on the tuple identifying the flow. If traffic arrives at a switch that has no active next hops, the entire flow gets dropped. 
Let F denote the event of a single link (switch) failure and L denote the event of traffic getting dropped between a random source and destination pair. Then, traffic loss due to a single link (switch) failure can be given by $P[L|F] \times P[F]$. We have,
\begin{align*}
P[F] \approx \lambda \times \text{Number of links (switches)}
\end{align*}
$\lambda$ is a constant that denotes the probability of a link (switch) failing in a given interval of time. $\lambda$ is also related to the \textit{Mean Time To Failure} (MTTF) of a single link (switch). 

We compute $P[L|F]$ empirically for shortest path routing, k-shortest and k-disjoint path routing. Traffic loss for k-shortest and k-disjoint path routing depends on the implementation, we assume complete source routing. Local convergence would imply that if there's a path for which the first link has failed, then the source switch would avoid that route and choose from  alternate paths. If there are no alternate paths, all packets in the flow would get dropped till the routing algorithm reconverges.

\subsection{Single Link and Switch Failure}
Traffic loss is determined by two factors - probability of hitting a failed link and the ability to locally reroute to alternate next hops. The probability of hitting a failed link is infact, directly proportional to the average path length. The route lengths are largest in k-disjoint followed by k-shortest followed by shortest path routing which employs paths of optimal lengths. We see that the same is also reflected in Figures~\ref{fig:linkFailure} and ~\ref{fig:switchFailure}. We emphasize that marginal differences in traffic loss for different routing schemes might not be of significant importance, since any traffic loss is short-lived lasting only till the routing algorithm reconverges. Both figures~\ref{fig:linkFailure} and ~\ref{fig:switchFailure} convey the same underlying message:  all schemes result in acceptable traffic loss in comparison to shortest path routing in a fat tree.



\section{Hardware Experiments}
In this section, we present experiments on a physical testbed 
consisting of 13 OpenFlow-enabled Pica8 switches, connecting to several VMs over a 1Gbps link. We emulate a leaf spine network with 3 spines, 7 ToR leaves and 28 servers, and a random graph topology with the same equipment. 
We study a shuffle operation, a common primitive among data processing jobs (e.g. to hash-partition keys in a map-reduce job). 
The shuffle operation involves an iperf~\cite{iperf2012} data transfer across every pair of nodes in the system. 
We used the following settings for the experiment:
\begin{center}
\begin{tabular}{ l |  c  }
  Transport & TCP \\ \hline
  TCP Window & 416 KB \\ \hline
  Flow size & 100 MB  \\ \hline
  Link capacity & 1 Gbps \\ \hline
  Routing & Shortest path \\ \hline
  Load balancing & ECMP 
\end{tabular}
\end{center}

Figure~\ref{fig:hardware_shuffle_flowdist} illustrates the distribution of flow completion times for the shuffle operation. The all-to-all heavy shuffle keeps the entire network busy and hence there is little room for spreading out the traffic. Hence, the expander doesn't perform significantly better than the leaf spine. Note that the expander graph is still a little more efficient because it uses shorter paths.

Figures~\ref{fig:cs_hardware} and ~\ref{fig:cs_simulation} illustrate coarse grained $C-S$ model runs, comparing the simulator and the hardware results. As can be seen, the benefits of expander graphs are notable even at such small scale. The baseline topology used for these experiments was a leaf spine network with 2 spines and 8 leaf switches supporting a total of 48 servers. 


\subsection{Implementing k-disjoint Routing}\label{kshortimpl}


\begin{figure}
\centering
\includegraphics[width=0.16\textwidth]{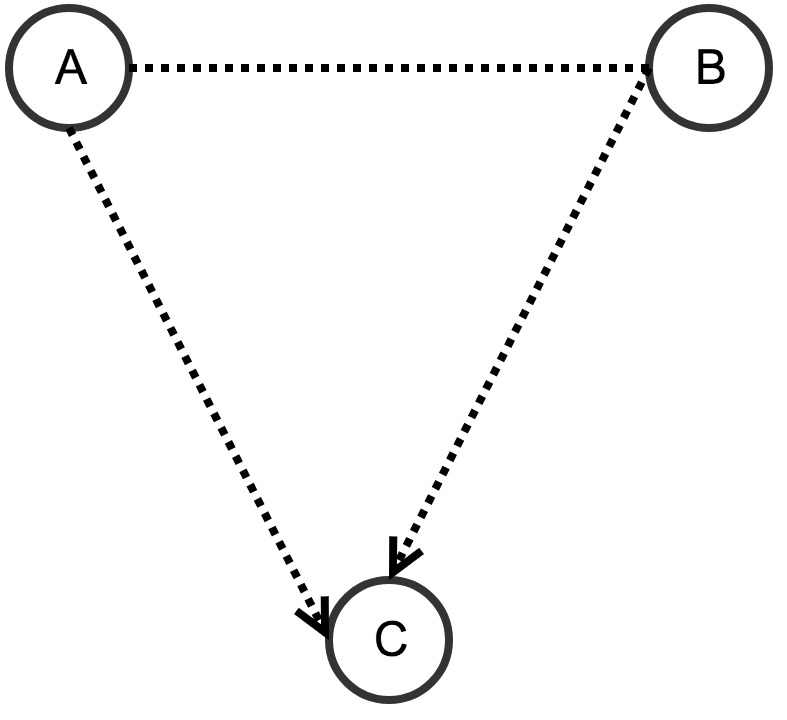}
\compactcaption{Consider k-disjoint paths from A to C, where k=2. One of the paths go through B. Similarily, one of the k-disjoint paths from B to C go through A. Hence, using only destination based next-hop selection at the switches would lead to a routing loop.}
\label{fig:kdisjointDestination}
\end{figure}

Implementing k-disjoint and k-shortest paths routing is non-trivial with commodity switches. Many switches support only destination based look-ups, such as longest prefix matches on destination IP addresses. It can be easily seen that a network fabric with only destination based look-ups can not support k-disjoint path or k-shortest path routing trivially (see Figure~\ref{fig:kdisjointDestination}). If matching on arbitrary bitmasks were allowed, then source routing could be implemented efficiently in openflow switches as shown in~\cite{jyothi15} and ~\cite{guo2010}. One could easily implement k-shortest path routing using source routing. However, most commodity openflow switches do not support matching on arbitrary bitmasks despite it being in the OpenFlow specification~\cite{openflow_v1.4}. Alternatively, one could stack MPLS labels in the packet header to implement source routing. This has two problems a) it might blow up the packet header size since each MPLS label is 4 Bytes b) typically, there is a limit to the number of MPLS labels one can stack up in the packet header.

We take another approach that uses a simple observation. Call a path $P$ from source $s$ to destination $t$ \textbf{expressible} if there is an intermediate hop $u$ in $P$ such that $P_{s\rightarrow u}$ and $P_{u\rightarrow t}$ are shortest paths, where $P_{x\rightarrow y}$ denotes the portion of $P$ from $x$ to $y$. If the network uses standard ECMP routing, then the routing entry for $P$ can be expressed by attaching a single MPLS label for $u$ at source $s$. Note that pushing a MPLS label for $u$ and using ECMP between $s$ to $u$ and $u$ to $t$ is not the same as $P$ since there could be multiple shortest paths from $s$ to $u$ and $u$ to $t$. However, this only increases path diversity without increasing the path length and hence will only help. We found out that most of the k-disjoint paths in k-disjoint path routing could be expressed as a concatenation of two shortest paths. Figure~\ref{fig:expressibility} illustrates that the number of k-disjoint paths that are not expressible is less than 0.01\%. Expressing routing paths by bouncing the packets off an intermediate hop also makes it possible to implement it using segment routing~\cite{segment_routing} which is increasingly getting adopted~\cite{segment_routing_cisco,segment_routing_juniper}.

\begin{figure}
\centering
\includegraphics[width=0.37\textwidth]{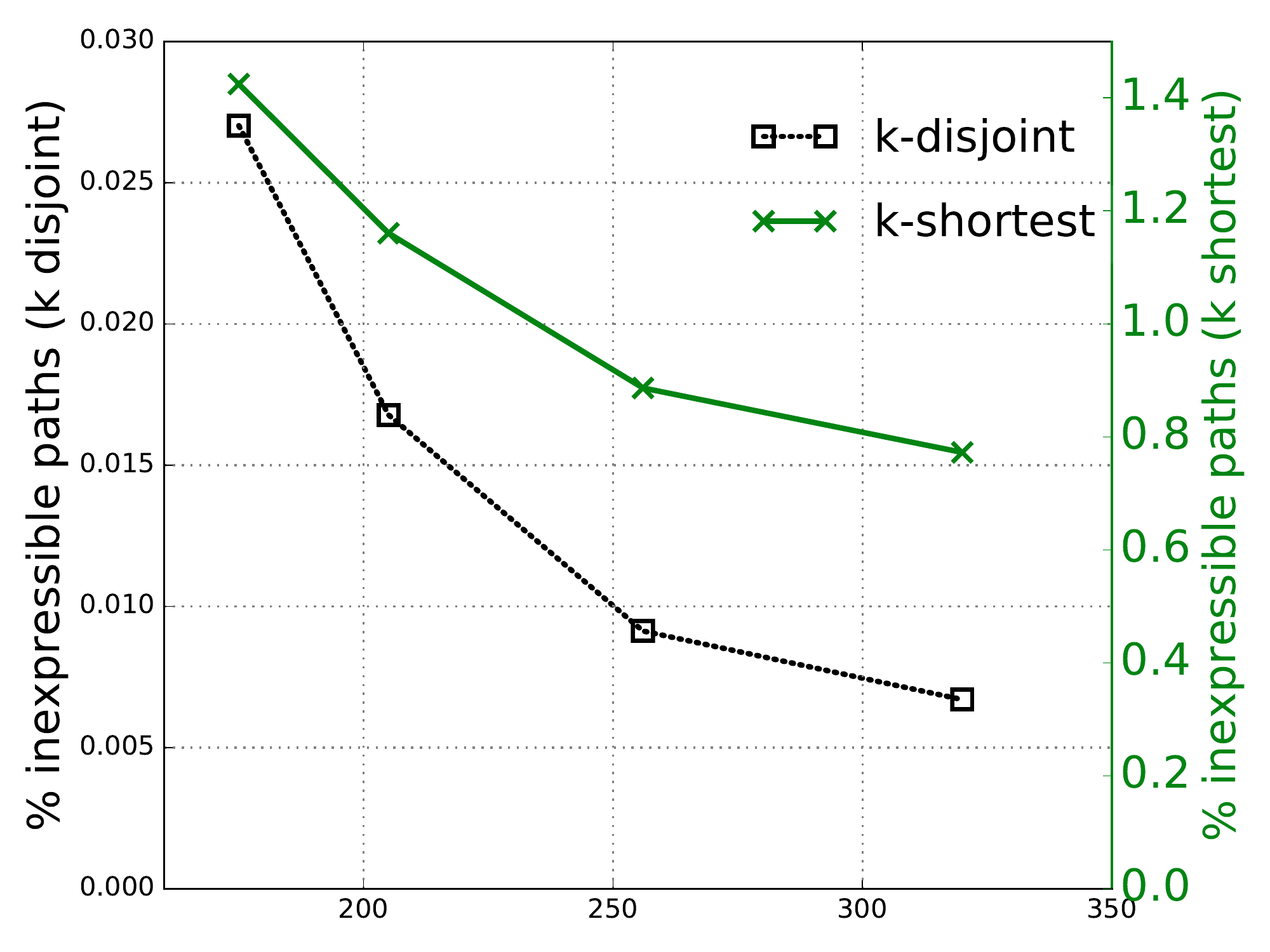}
\compactcaption{Nearly all k-disjoint paths are expressible as a concatenation of two shortest paths. The y-axis represents the percentage of paths that are not expressible. The y-axis on the right represents the same quantity for k-shortest paths.}
\label{fig:expressibility}
\end{figure}

\section{Optimizing long distance links} \label{cross links}
Managing wiring complexity of an expander network poses a challenge for deployment, especially on large scale. While random graphs with more localized short distance links have been proposed as a solution to better manage the complexity~\cite{singla14}, such networks would lose out on performance. In appendix~\ref{cabling and datacenter expansion}, we show that the expansion ratio of an expander (which can be thought of as a metric to judge the expander quality) decreases at least linearly with decreasing fraction of long distance links. We take another approach instead, one that is applicable for small to medium sized networks and does not make any compromise on the quality of the expander graph. We used the publicly available Metis graph partitioning library~\cite{metis} to partition the network into $k$ multiple equal-sized clusters (in terms of the number of racks), minimizing the number of cross-cluster long distance links. Note that this problem is NP-hard in general, even for $k=2$ clusters, so the Metis software uses  heuristics that are known to work well in practice. Figure~\ref{fig:cross links} shows the number of cross-cluster links after graph partitioning. Without partitioning, the expected number of cross-cluster links in a random graph, partitioned into $k$ clusters, would be $(k-1)/k$. Thus, for a random network with $320$ switches, smartly partitioning the network into $k=5$ partitions brings down the number of long distance links from $80\%$ to approximately $50\%$ (see Figure~\ref{fig:cross links}). The partitioning trick doesn't however work for hyperscale sized networks. As can be seen in Figure~\ref{fig:cross links}, as the network size increases, the fraction of cross cluster links after partitioning increases.

\begin{figure}
\centering
\includegraphics[width=0.43\textwidth]{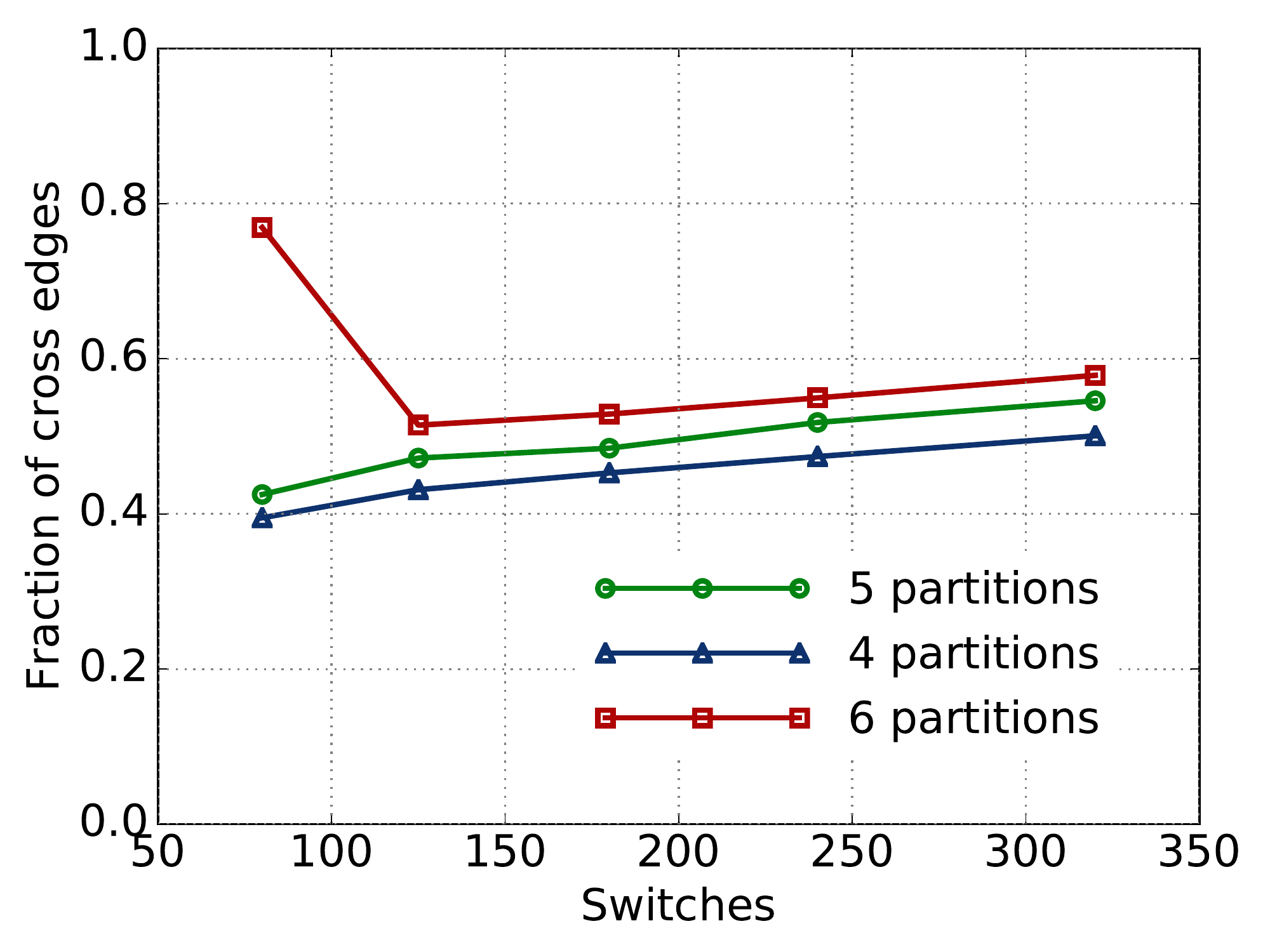}
\compactcaption{Fraction of cross cluster long-distance links after partitioning the random graph into multiple clusters using a graph partitioning algorithm.}
\label{fig:cross links}
\end{figure}

\section{Related work}
\subsection{Tree-based designs}
Apart from the standard 3-tier fat tree~\cite{AlFares08}, a lot of research has been done to improve tree-based designs.
LEGUP~\cite{legup} proposed a design to build heterogeneous Clos networks for flexible expansion. F10~\cite{F10}, although inherently similar to Fat trees, reacts quickly to failures and preserves performance even in the face of multiple failed nodes by co-designing routing algorithms, network and failure handling mechanism. BCube~\cite{bcube} recursively builds a tree-based server-centric architecture and is applicable mainly to container based datacenters that require high failure resilience.

\subsection{Non-tree based static networks}
Slimfly~\cite{besta14} optimizes for low network diameters and is also based on expander graphs. The proposal relies heavily on non-oblivious routing schemes that requires information about queues either locally or globally~\cite{besta14}. Further, it loses out on performance for having a smaller expansion ratio than Jellyfish or Xpander.
Dragonfly~\cite{dragonfly} uses groups of high radix routers as virtual routers, effectively increasing the network radix to achieve low diameter. However, reliance on randomized routing based on Valiant's algorithm to route packets across groups poses challenges for adoption. 
Designs based on structured sub-graphs and random links between the structured sub-graphs~\cite{smallWorldDatacenters}, inspired from the small-world distribution, were proposed to achieve high failure resilience and bandwidth. These designs have a lot in common with the random graph topology. The Flat-tree~\cite{Flat-tree} architectures explores a convertible design between Clos topologies and random graphs to achieve the best of both worlds.

\subsection{Dynamic Networks}
Another research thrust has been towards dynamic networks, where link connections are configured dynamically based on the traffic load using free space optics. The argument for dynamic networks is the observation that traffic patterns in datacenters are highly skewed with only a few racks sending and receiving most of the traffic at any given point in time~\cite{ghobadi16}. Proposals connecting racks with ceiling mirrors~\cite{hamedazimi14,zhou12}, beam redirecting ceiling balls~\cite{ghobadi16} that provide a large fan-out and using wireless links as an alternative to wired links~\cite{OSA,helios,MegaSwitch,cthrough}.  The advantage of such designs is that by varying the tilt of the reflecting surfaces, the incoming traffic beam can be redirected to a large number of racks. However, slow switching time of links (around 10-100 ms) poses a challenge in adoption of dynamic networks. Mordia~\cite{mordia} uses circuit scheduling to proactively assign circuit bandwidths and achieves a much lower switching time of around 11.5$\mu$s.

\subsection{Traffic Engineering}
Several works in the past have aimed at improving load balance in Clos topologies to minimize flow completion times. Some of these ideas can be extended to expander networks. CONGA~\cite{CONGA} uses flowlet switching~\cite{flowletSwitching1}, choosing paths based on congestion levels on the paths. PDQ~\cite{PDQ} uses preemptive flow scheduling to simulate a number of scheduling schemes like shortest job first and earliest deadline first. Drill~\cite{Drill} makes fine grained decisions at the packet level based on load on the next hop. Presto~\cite{Presto} evenly distributes fixed sized flow-lets called flow-cells across all paths to the destination. Specific traffic engineering and load balancing schemes targeted towards expander networks would be an interesting avenue of research which we leave for future work.

\section{Conclusion}

We discovered that expander datacenters are extremely effective even with just traditional protocols, offering around 3- 4$\times$ and 1.5- 2$\times$ more bandwidth on average compared to fat trees (large scale) and leaf spine (small scale) networks respectively, for a wide range of cases. The advantages of expanders are even more pronounced with k-disjoint path routing, that we showed, can be implemented with segment routing. We examined extreme traffic situations like incasts and outcasts and showed that expanders are more resilient to such cases and degrade more gracefully as load increases. Furthermore, we analyzed several other metrics of interests: traffic loss on failures, time taken to complete a shuffle operation, queue occupancy and fraction of long distance links. Our results show that non-oblivious schemes or complex traffic engineering schemes are not required for building an expander datacenter.
Our experimental results, based on extensive simulation and hardware emulation, unanimously identify significant advantages of using expander-based datacenters over tree-based networks at both small scale and large scale with protocols that are realizable with today's hardware.

{\footnotesize \bibliographystyle{acm}
\bibliography{bibliography}}

\begin{thebibliography}{10}

\bibitem{segment_routing_cisco}
Cisco segment routing.
\newblock
  \url{https://www.cisco.com/c/en/us/solutions/service-provider/cloud-scale-networking-solutions/segment-routing.html}.

\bibitem{AristaBestPractices}
Cloud networking scale out - arista.
\newblock
  \url{https://www.arista.com/assets/data/pdf/Whitepapers/Cloud_Networking\_\_Scaling\_Out\_Data_Center\_Networks.pdf}.

\bibitem{fbTrafficPattern}
Data sharing on traffic pattern inside facebook’s datacenter network.
\newblock \url{https://research.fb.com/data-sharing-on-traffic-pattern-inside-\
  facebooks-datacenter-network/}.

\bibitem{kowalski2013expander}
Expander graphs.
\newblock \url{https://people.math.ethz.ch/~kowalski/expander-graphs.pdf}.

\bibitem{fbDcFabric}
Introducing data center fabric, the next-generation facebook data center
  network.
\newblock \url{https://code.facebook.com/posts/360346274145943}.

\bibitem{segment_routing_juniper}
Juniper segment routing.
\newblock
  \url{https://www.juniper.net/documentation/en_US/northstar3.2.0/topics/concept/northstar-spring.html}.

\bibitem{metis}
Metis software.
\newblock \url{http://glaros.dtc.umn.edu/gkhome/metis/metis/overview}.

\bibitem{ns3}
Ns3.
\newblock \url{https://www.nsnam.org/}.

\bibitem{segment_routing}
Segment routing architecture.
\newblock \url{https://tools.ietf.org/html/rfc8402}.

\bibitem{AlFares08}
{\sc Al-Fares, M., Loukissas, A., and Vahdat, A.}
\newblock A scalable, commodity data center network architecture.
\newblock In {\em Proceedings of the ACM SIGCOMM 2008 Conference on Data
  Communication\/} (New York, NY, USA, 2008), SIGCOMM '08, ACM, pp.~63--74.

\bibitem{CONGA}
{\sc Alizadeh, M., Edsall, T., Dharmapurikar, S., Vaidyanathan, R., Chu, K.,
  Fingerhut, A., Lam, V.~T., Matus, F., Pan, R., Yadav, N., and Varghese, G.}
\newblock Conga: Distributed congestion-aware load balancing for datacenters.
\newblock {\em SIGCOMM Comput. Commun. Rev. 44}, 4 (Aug. 2014), 503--514.

\bibitem{benson2010}
{\sc Benson, T., Akella, A., and Maltz, D.~A.}
\newblock Network traffic characteristics of data centers in the wild.
\newblock In {\em Proceedings of the 10th ACM SIGCOMM Conference on Internet
  Measurement\/} (New York, NY, USA, 2010), IMC '10, ACM, pp.~267--280.

\bibitem{besta14}
{\sc Besta, M., and Hoefler, T.}
\newblock Slim fly: A cost effective low-diameter network topology.
\newblock In {\em Proceedings of the International Conference for High
  Performance Computing, Networking, Storage and Analysis\/} (Piscataway, NJ,
  USA, 2014), SC '14, IEEE Press, pp.~348--359.

\bibitem{MegaSwitch}
{\sc Chen, L., Chen, K., Zhu, Z., Yu, M., Porter, G., Qiao, C., and Zhong, S.}
\newblock Enabling wide-spread communications on optical fabric with
  megaswitch.
\newblock In {\em 14th {USENIX} Symposium on Networked Systems Design and
  Implementation ({NSDI} 17)\/} (Boston, MA, 2017), {USENIX} Association,
  pp.~577--593.

\bibitem{legup}
{\sc Curtis, A.~R., Keshav, S., and Lopez-Ortiz, A.}
\newblock Legup: Using heterogeneity to reduce the cost of data center network
  upgrades.
\newblock In {\em Proceedings of the 6th International COnference\/} (New York,
  NY, USA, 2010), Co-NEXT '10, ACM, pp.~14:1--14:12.

\bibitem{delimitrou2012}
{\sc Delimitrou, C., Sankar, S., Kansal, A., and Kozyrakis, C.}
\newblock Echo: Recreating network traffic maps for datacenters with tens of
  thousands of servers.
\newblock In {\em Proceedings of the 2012 IEEE International Symposium on
  Workload Characterization (IISWC)\/} (Washington, DC, USA, 2012), IISWC '12,
  IEEE Computer Society, pp.~14--24.

\bibitem{hoseModel}
{\sc Duffield, N.~G., Goyal, P., Greenberg, A., Mishra, P., Ramakrishnan,
  K.~K., and van~der Merive, J.~E.}
\newblock A flexible model for resource management in virtual private networks.
\newblock In {\em Proceedings of the Conference on Applications, Technologies,
  Architectures, and Protocols for Computer Communication\/} (New York, NY,
  USA, 1999), SIGCOMM '99, ACM, pp.~95--108.

\bibitem{birkhoffVonNeumann}
{\sc Dufossé, F., and Uçar, B.}
\newblock Notes on birkhoff–von neumann decomposition of doubly stochastic
  matrices.
\newblock {\em Linear Algebra and its Applications 497\/} (2016), 108 -- 115.

\bibitem{ellis2011expansion}
{\sc Ellis, D.}
\newblock The expansion of random regular graphs.

\bibitem{helios}
{\sc Farrington, N., Porter, G., Radhakrishnan, S., Bazzaz, H.~H., Subramanya,
  V., Fainman, Y., Papen, G., and Vahdat, A.}
\newblock Helios: A hybrid electrical/optical switch architecture for modular
  data centers.
\newblock In {\em Proceedings of the ACM SIGCOMM 2010 Conference\/} (New York,
  NY, USA, 2010), SIGCOMM '10, ACM, pp.~339--350.

\bibitem{ghobadi16}
{\sc Ghobadi, M., Mahajan, R., Phanishayee, A., Devanur, N., Kulkarni, J.,
  Ranade, G., Blanche, P.-A., Rastegarfar, H., Glick, M., and Kilper, D.}
\newblock Projector: Agile reconfigurable data center interconnect.
\newblock In {\em Proceedings of the 2016 ACM SIGCOMM Conference\/} (New York,
  NY, USA, 2016), SIGCOMM '16, ACM, pp.~216--229.

\bibitem{Drill}
{\sc Ghorbani, S., Yang, Z., Godfrey, P.~B., Ganjali, Y., and Firoozshahian,
  A.}
\newblock Drill: Micro load balancing for low-latency data center networks.
\newblock In {\em Proceedings of the Conference of the ACM Special Interest
  Group on Data Communication\/} (New York, NY, USA, 2017), SIGCOMM '17, ACM,
  pp.~225--238.

\bibitem{vl209}
{\sc Greenberg, A., Hamilton, J.~R., Jain, N., Kandula, S., Kim, C., Lahiri,
  P., Maltz, D.~A., Patel, P., and Sengupta, S.}
\newblock Vl2: A scalable and flexible data center network.
\newblock In {\em Proceedings of the ACM SIGCOMM 2009 Conference on Data
  Communication\/} (New York, NY, USA, 2009), SIGCOMM '09, ACM, pp.~51--62.

\bibitem{bcube}
{\sc Guo, C., Lu, G., Li, D., Wu, H., Zhang, X., Shi, Y., Tian, C., Zhang, Y.,
  and Lu, S.}
\newblock Bcube: A high performance, server-centric network architecture for
  modular data centers.
\newblock In {\em Proceedings of the ACM SIGCOMM 2009 Conference on Data
  Communication\/} (New York, NY, USA, 2009), SIGCOMM '09, ACM, pp.~63--74.

\bibitem{guo2010}
{\sc Guo, C., Lu, G., Wang, H.~J., Yang, S., Kong, C., Sun, P., Wu, W., and
  Zhang, Y.}
\newblock Secondnet: A data center network virtualization architecture with
  bandwidth guarantees.
\newblock In {\em Proceedings of the 6th International COnference\/} (New York,
  NY, USA, 2010), Co-NEXT '10, ACM, pp.~15:1--15:12.

\bibitem{guo2008}
{\sc Guo, C., Wu, H., Tan, K., Shi, L., Zhang, Y., and Lu, S.}
\newblock Dcell: A scalable and fault-tolerant network structure for data
  centers.
\newblock In {\em Proceedings of the ACM SIGCOMM 2008 Conference on Data
  Communication\/} (New York, NY, USA, 2008), SIGCOMM '08, ACM, pp.~75--86.

\bibitem{hamedazimi14}
{\sc Hamedazimi, N., Qazi, Z., Gupta, H., Sekar, V., Das, S.~R., Longtin,
  J.~P., Shah, H., and Tanwer, A.}
\newblock Firefly: A reconfigurable wireless data center fabric using
  free-space optics.
\newblock In {\em Proceedings of the 2014 ACM Conference on SIGCOMM\/} (New
  York, NY, USA, 2014), SIGCOMM '14, ACM, pp.~319--330.

\bibitem{Presto}
{\sc He, K., Rozner, E., Agarwal, K., Felter, W., Carter, J., and Akella, A.}
\newblock Presto: Edge-based load balancing for fast datacenter networks.
\newblock {\em SIGCOMM Comput. Commun. Rev. 45}, 4 (Aug. 2015), 465--478.

\bibitem{PDQ}
{\sc Hong, C.-Y., Caesar, M., and Godfrey, P.~B.}
\newblock Finishing flows quickly with preemptive scheduling.
\newblock In {\em Proceedings of the ACM SIGCOMM 2012 Conference on
  Applications, Technologies, Architectures, and Protocols for Computer
  Communication\/} (New York, NY, USA, 2012), SIGCOMM '12, ACM, pp.~127--138.

\bibitem{iperf2012}
{\sc Iperf, T.}
\newblock Udp bandwidth measurement tool, 2012.

\bibitem{jyothi15}
{\sc Jyothi, S.~A., Dong, M., and Godfrey, P.~B.}
\newblock Towards a flexible data center fabric with source routing.
\newblock In {\em Proceedings of the 1st ACM SIGCOMM Symposium on Software
  Defined Networking Research\/} (New York, NY, USA, 2015), SOSR '15, ACM,
  pp.~10:1--10:8.

\bibitem{jyothi2016}
{\sc Jyothi, S.~A., Singla, A., Godfrey, P.~B., and Kolla, A.}
\newblock Measuring and understanding throughput of network topologies.
\newblock In {\em Proceedings of the International Conference for High
  Performance Computing, Networking, Storage and Analysis\/} (Piscataway, NJ,
  USA, 2016), SC '16, IEEE Press, pp.~65:1--65:12.

\bibitem{flowletSwitching2}
{\sc Kandula, S., Katabi, D., Sinha, S., and Berger, A.}
\newblock Dynamic load balancing without packet reordering.
\newblock {\em SIGCOMM Comput. Commun. Rev. 37}, 2 (Mar. 2007), 51--62.

\bibitem{kassing17}
{\sc Kassing, S., Valadarsky, A., Shahaf, G., Schapira, M., and Singla, A.}
\newblock Beyond fat-trees without antennae, mirrors, and disco-balls.
\newblock In {\em Proceedings of the Conference of the ACM Special Interest
  Group on Data Communication\/} (New York, NY, USA, 2017), SIGCOMM '17, ACM,
  pp.~281--294.

\bibitem{dragonfly}
{\sc Kim, J., Dally, W.~J., Scott, S., and Abts, D.}
\newblock Technology-driven, highly-scalable dragonfly topology.
\newblock In {\em Proceedings of the 35th International Symposium on Computer
  Architecture\/} (Washington, DC USA, 2008), pp.~77--88.

\bibitem{F10}
{\sc Liu, V., Halperin, D., Krishnamurthy, A., and Anderson, T.}
\newblock F10: A fault-tolerant engineered network.
\newblock In {\em Proceedings of the 10th USENIX Conference on Networked
  Systems Design and Implementation\/} (Berkeley, CA, USA, 2013), nsdi'13,
  USENIX Association, pp.~399--412.

\bibitem{mordia}
{\sc Porter, G., Strong, R., Farrington, N., Forencich, A., Chen-Sun, P.,
  Rosing, T., Fainman, Y., Papen, G., and Vahdat, A.}
\newblock Integrating microsecond circuit switching into the data center.
\newblock {\em SIGCOMM Comput. Commun. Rev. 43}, 4 (Aug. 2013), 447--458.

\bibitem{htsim}
{\sc Raiciu, C., Wischik, D., and Handley, M.}
\newblock Practical congestion control for multipath transport protocols.

\bibitem{roy2015}
{\sc Roy, A., Zeng, H., Bagga, J., Porter, G., and Snoeren, A.~C.}
\newblock Inside the social network's (datacenter) network.
\newblock In {\em Proceedings of the 2015 ACM Conference on Special Interest
  Group on Data Communication\/} (New York, NY, USA, 2015), SIGCOMM '15, ACM,
  pp.~123--137.

\bibitem{smallWorldDatacenters}
{\sc Shin, J.-Y., Wong, B., and Sirer, E.~G.}
\newblock Small-world datacenters.
\newblock In {\em Proceedings of the 2Nd ACM Symposium on Cloud Computing\/}
  (New York, NY, USA, 2011), SOCC '11, ACM, pp.~2:1--2:13.

\bibitem{JupiterRising15}
{\sc Singh, A., Ong, J., Agarwal, A., Anderson, G., Armistead, A., Bannon, R.,
  Boving, S., Desai, G., Felderman, B., Germano, P., Kanagala, A., Provost, J.,
  Simmons, J., Tanda, E., Wanderer, J., H\"{o}lzle, U., Stuart, S., and Vahdat,
  A.}
\newblock Jupiter rising: A decade of clos topologies and centralized control
  in google's datacenter network.
\newblock In {\em Proceedings of the 2015 ACM Conference on Special Interest
  Group on Data Communication\/} (New York, NY, USA, 2015), SIGCOMM '15, ACM,
  pp.~183--197.

\bibitem{singla14}
{\sc Singla, A., Godfrey, P.~B., and Kolla, A.}
\newblock High throughput data center topology design.
\newblock In {\em 11th {USENIX} Symposium on Networked Systems Design and
  Implementation ({NSDI} 14)\/} (Seattle, WA, 2014), {USENIX} Association,
  pp.~29--41.

\bibitem{singla12}
{\sc Singla, A., Hong, C.-Y., Popa, L., and Godfrey, P.~B.}
\newblock Jellyfish: Networking data centers randomly.
\newblock In {\em Proceedings of the 9th USENIX Conference on Networked Systems
  Design and Implementation\/} (Berkeley, CA, USA, 2012), NSDI'12, USENIX
  Association, pp.~17--17.

\bibitem{OSA}
{\sc Singla, A., Singh, A., and Chen, Y.}
\newblock {OSA}: An optical switching architecture for data center networks
  with unprecedented flexibility.
\newblock In {\em Presented as part of the 9th {USENIX} Symposium on Networked
  Systems Design and Implementation ({NSDI} 12)\/} (San Jose, CA, 2012),
  {USENIX}, pp.~239--252.

\bibitem{flowletSwitching1}
{\sc Sinha, S., Kandula, S., and Katabi, D.}
\newblock {Harnessing TCPs Burstiness using Flowlet Switching}.
\newblock In {\em 3rd ACM SIGCOMM Workshop on Hot Topics in Networks
  (HotNets)\/} (San Diego, CA, November 2004).

\bibitem{openflow_v1.4}
{\sc Specification-Version, O. O.~S.}
\newblock 1.4. 0, 2013.

\bibitem{tomic13}
{\sc Tomic, R.~V.}
\newblock Optimal networks from error correcting codes.
\newblock In {\em Proceedings of the Ninth ACM/IEEE Symposium on Architectures
  for Networking and Communications Systems\/} (Piscataway, NJ, USA, 2013),
  ANCS '13, IEEE Press, pp.~169--180.

\bibitem{valadarsky16}
{\sc Valadarsky, A., Shahaf, G., Dinitz, M., and Schapira, M.}
\newblock Xpander: Towards optimal-performance datacenters.
\newblock In {\em Proceedings of the 12th International on Conference on
  Emerging Networking EXperiments and Technologies\/} (New York, NY, USA,
  2016), CoNEXT '16, ACM, pp.~205--219.

\bibitem{Walraed-Sullivan2013}
{\sc Walraed-Sullivan, M., Vahdat, A., and Marzullo, K.}
\newblock Aspen trees: Balancing data center fault tolerance, scalability and
  cost.
\newblock In {\em Proceedings of the Ninth ACM Conference on Emerging
  Networking Experiments and Technologies\/} (New York, NY, USA, 2013), CoNEXT
  '13, ACM, pp.~85--96.

\bibitem{cthrough}
{\sc Wang, G., Andersen, D.~G., Kaminsky, M., Papagiannaki, K., Ng, T.~E.,
  Kozuch, M., and Ryan, M.}
\newblock c-through: part-time optics in data centers.
\newblock {\em SIGCOMM Comput. Commun. Rev. 41}, 4 (Aug. 2010), --.

\bibitem{MPTCP}
{\sc Wischik, D., Raiciu, C., Greenhalgh, A., and Handley, M.}
\newblock Design, implementation and evaluation of congestion control for
  multipath tcp.
\newblock In {\em Proceedings of the 8th USENIX Conference on Networked Systems
  Design and Implementation\/} (Berkeley, CA, USA, 2011), NSDI'11, USENIX
  Association, pp.~99--112.

\bibitem{Flat-tree}
{\sc Xia, Y., Sun, X.~S., Dzinamarira, S., Wu, D., Huang, X.~S., and Ng, T.
  S.~E.}
\newblock A tale of two topologies: Exploring convertible data center network
  architectures with flat-tree.
\newblock In {\em Proceedings of the Conference of the ACM Special Interest
  Group on Data Communication\/} (New York, NY, USA, 2017), SIGCOMM '17, ACM,
  pp.~295--308.

\bibitem{kShortestPaths}
{\sc Yen, J.~Y.}
\newblock Finding the k shortest loopless paths in a network.
\newblock {\em Management Science 17}, 11 (1971), 712--716.

\bibitem{zhou12}
{\sc Zhou, X., Zhang, Z., Zhu, Y., Li, Y., Kumar, S., Vahdat, A., Zhao, B.~Y.,
  and Zheng, H.}
\newblock Mirror mirror on the ceiling: Flexible wireless links for data
  centers.
\newblock In {\em Proceedings of the ACM SIGCOMM 2012 Conference on
  Applications, Technologies, Architectures, and Protocols for Computer
  Communication\/} (New York, NY, USA, 2012), SIGCOMM '12, ACM, pp.~443--454.

\end{thebibliography}



\appendix

\section{On Long distance links in Expander networks}\label{cabling and datacenter expansion}
The complexity of wiring expander based networks like Jellyfish and Xpander poses a hindrance for adoption. Often a datacenter will be divided into several clusters (for considerations of  incremental expansion, simplifying monitoring and power supply). The metric we consider are number of long distance links or cross cluster links. 

Intuitively, we expect the quality of expander networks to deteriorate with fewer long distance links. We formalize this intuition using the notion of edge expansion $h(G)$ of a graph $G$.

\[h(G) = \text{min}_{|S| < \frac{n}{2}} \frac{\partial S}{|S|}\]

where, $\partial S$ is the number of edges going out of set $S$. Edge expansion can be interpreted as a metric to judge the quality of the expander. Random graphs have high edge expansion ratio~\cite{ellis2011expansion} that asymptotically approaches $d/2$. Theorem~\ref{edge_expansion_limit_theorem} suggests that the edge expansion decreases linearly with the fraction of long distance links.

\begin{theorem} \label{edge_expansion_limit_theorem}
For a $d-$regular graph $G$ with $n$ nodes, partitioned into $k$ equal-sized clusters, \[h(G) \leq \frac{d k f}{2(k-1)}\], where $f=\text{fraction of long distance links}$
\end{theorem}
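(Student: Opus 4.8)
The plan is to exhibit a single vertex set $S$ that witnesses the bound, built directly from the given cluster partition. Write $C_1,\dots,C_k$ for the clusters, each of size $n/k$, and note first that since $G$ is $d$-regular it has $dn/2$ edges, so the number of long distance (cross-cluster) links is exactly $f\cdot dn/2$. For an index set $A\subseteq\{1,\dots,k\}$ put $S_A=\bigcup_{i\in A}C_i$, so $|S_A|=|A|\,n/k$. The key structural observation is that an intra-cluster edge never lies in $\partial S_A$ (both endpoints sit in a single $C_i$, which is wholly inside or wholly outside $S_A$); hence $\partial S_A$ consists only of cross-cluster links, namely those joining clusters $C_i,C_j$ with exactly one of $i,j$ in $A$.

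First I would choose $A$ uniformly at random among the $\binom{k}{m}$ subsets of size $m=\lfloor k/2\rfloor$. A direct count gives, for a fixed cross-cluster edge between $C_i$ and $C_j$, $\Pr[\,|A\cap\{i,j\}|=1\,]=\frac{2m(k-m)}{k(k-1)}$, so by linearity of expectation
\[
 \mathbf{E}\,[\,|\partial S_A|\,]=\frac{fdn}{2}\cdot\frac{2m(k-m)}{k(k-1)}.
\]
Hence some concrete choice of $A$ satisfies $|\partial S_A|\le\frac{fdn}{2}\cdot\frac{2m(k-m)}{k(k-1)}$ while $|S_A|=mn/k$; dividing gives $\frac{|\partial S_A|}{|S_A|}\le\frac{fd(k-m)}{k-1}$. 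Choosing $m$ as close to $k/2$ as the constraint $|S_A|\le n/2$ permits — exactly $m=k/2$ for even $k$ — turns this into $\frac{dkf}{2(k-1)}$, which is the desired upper bound on $h(G)$ since $h(G)\le|\partial S_A|/|S_A|$ for any admissible $S_A$.

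The one delicate point, which I expect to be the main nuisance rather than the combinatorial content, is the boundary in the definition of $h(G)$: when $k$ is even the witness $S_A$ has size exactly $n/2$, so under the strict convention $|S|<n/2$ one should either adopt the (equally common) convention $|S|\le n/2$, delete a single vertex from $S_A$ at a cost that is $O(d/n)$ (removing a vertex changes $|\partial S_A|$ by at most $d$ and $|S_A|$ by one), or, for odd $k$, just take $m=(k-1)/2$ outright. Everything else is routine: the heart of the argument is simply that averaging over balanced unions of whole clusters converts ``an $f$ fraction of the edges cross clusters'' into ``about an $f$ fraction of a balanced cut.'' One can also phrase it deterministically by passing to the weighted quotient multigraph on $k$ super-nodes of weight $n/k$ and taking its cheapest balanced cut, but the probabilistic version is cleanest.
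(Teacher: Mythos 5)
Your argument is essentially the paper's own proof: both average $|\partial S|$ over all unions of $k/2$ whole clusters — the paper counts, for each cross-cluster edge, the $2\binom{k-2}{k/2-1}$ balanced index sets whose boundary contains it, which is exactly your $\Pr[\,|A\cap\{i,j\}|=1\,]=\frac{2m(k-m)}{k(k-1)}$ computation with $m=k/2$ — and then conclude that some such $S$ attains the average. One small caveat in your side remarks: for odd $k$, taking $m=(k-1)/2$ yields $\frac{d(k+1)f}{2(k-1)}$ rather than the stated $\frac{dkf}{2(k-1)}$, so that case is not quite handled as written; but the paper's proof also implicitly assumes $k$ even and uses a witness of size exactly $n/2$ despite the strict inequality in its definition of $h(G)$, so on the boundary issues you are, if anything, more careful than the original.
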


\begin{proof}
We simply pick $S$ to be all nodes in $k/2$ clusters. There are ${k\choose k/2}$ such sets. Any edge appears in $\partial S$ for $2{k-2\choose k/2-1}$ such sets. Thus, for at least one such set $S$, we will have \[\partial S \leq \frac{ndf}{2} \times 2{k-2\choose k/2-1}\Big/{k\choose k/2} =  \frac{nd k f }{4(k-1)}\].

Thus, \[h(G) \leq \frac{\partial S}{|S|} \leq \frac{nd k f }{4(k-1)}\Big/n/2 = \frac{d k f }{2(k-1)}\]
\end{proof}

\end{document}